\newcommand{\la}{\ensuremath{ \mathcal{L}}}
\newcommand{\sd}[1]{{#1}}
\newcommand{\abs}[1]{\left | #1 \right |}
\newcommand{\drop}[1]{}
\newtheorem*{warmup}{Warm-up}
\begin{document}

\title{Dynamic and Multi-functional Labeling Schemes}


\author{S\o ren Dahlgaard, Mathias B\ae k Tejs Knudsen and Noy Rotbart }

\institute{
 Department of Computer Science, University of Copenhagen\\
  Universitetsparken 5, 2100 Copenhagen\\
  \texttt{\{soerend,makn,noyro\}@di.ku.dk} }

\maketitle

\begin{abstract}
We investigate labeling schemes supporting  adjacency, ancestry, sibling,
and connectivity queries  in forests. 
In the course of more than 20 years, the existence of $\log n + O(\log \log)$  labeling schemes supporting each of these  functions was proven, with the most recent being ancestry [Fraigniaud and Korman, STOC '10].
Several multi-functional labeling schemes also enjoy lower or upper bounds of
$\log n + \Omega(\log \log n)$ or $\log n + O(\log \log n)$ respectively.
Notably an upper bound of $\log n + 5\log \log n$ for adjacency+siblings
and a lower bound of $\log n + \log \log n$ for each of the functions siblings, ancestry, and
connectivity [Alstrup et al., SODA '03].
We improve the constants hidden in the $O$-notation. In particular we show a
$\log n + 2\log \log n$ lower bound for connectivity+ancestry 
and connectivity+siblings, as well as an upper bound of
$\log n + 3\log \log n + O(\log \log \log n)$ for connectivity+adjacency+siblings
by altering existing methods.

In the context of dynamic labeling schemes it is known that
ancestry requires $\Omega(n)$ bits [Cohen, et~al.~PODS '02]. 
In contrast, we show upper and lower bounds on the label size for adjacency, siblings, and
connectivity of $2\log n$ bits, and $3 \log n$  to support all three functions.
There exist efficient adjacency labeling schemes for planar, bounded treewidth, bounded arboricity and interval graphs.
In a dynamic setting, we show a lower bound of  $\Omega(n)$ for each of those families.
\end{abstract}

\section{Introduction}\label{sec:intro}

A labeling scheme is a  method of distributing the information about the
structure of a graph among its vertices by assigning short \emph{labels},
such that a selected function on pairs of vertices can be computed using only their labels.
The concept  was introduced in a restricted manner by Bruer and Folkman~\cite{Breuer67}, revisited by Kannan, Naor and Rudich~\cite{Kannan92},
and explored by a wealth of subsequent work~\cite{Alstrup02,Alstrup05,Korman10,alstrup2013near,Thorup01,Peleg00}.

Labeling schemes for trees have been studied extensively in the literature due
to their practical applications in improving the performance of XML search
engines. Indeed, XML documents can be viewed as labeled forests, and typical
queries over the documents amount to testing classic properties such as
adjacency, ancestry, siblings and connectivity  between such labeled tree nodes
\cite{wu2004prime}. In their seminal paper, Kannan et. al.~\cite{Kannan92}
introduced labeling schemes using at most $2\log n$ \footnote{Throughout this
paper we let $\log n = \lceil \log_2 n \rceil$ unless stated otherwise.} bits
for each of the functions adjacency, siblings  and ancestry. Improving
these results have been motivated heavily by the fact that a small improvement
of the label size may contribute significantly to the performance of XML search
engines. Alstrup, Bille and Rauhe~\cite{Alstrup05} established a lower bound of
$\log n + \log \log n$ for the functions siblings, connectivity and ancestry
along with a matching upper bound for the first two. \sd{For adjacency, a $\log n +
O(\log^* n)$ labeling scheme was presented in \cite{Alstrup02}.} A $\log n +
O(\log \log n)$  labeling scheme for ancestry was established only recently by
Fraigniaud and Korman~\cite{Korman10}.

In most settings, it is the case that the \sd{structure of the graph to be
labeled}\drop{data structure} is not known in advance.
In contrast to the \emph{static} setting described above, a \emph{dynamic}
labeling scheme typically receives the tree as an online sequence of
\sd{topological events}\drop{addition of leaves}, with a natural extension
that includes  removal of
leaves. Cohen, Kaplan and Milo~\cite{cohen2010labeling} considered
\emph{dynamic labeling schemes} where the  encoder receives $n$  leaf
insertions and assigns unique   labels that must remain unchanged throughout
the labeling process. In this context, they showed  a tight bound of
$\Theta(n)$ bits for any dynamic ancestry labeling scheme\drop{\footnote{From
hereon, the lower bound  is referred as Cohen's  bound.}}. We stress the
importance of their lower bound by showing that it extends to  routing, NCA,
and distance as well. In light of this lower bound, Korman, Peleg and
Rodeh~\cite{korman2004labeling} introduced  dynamic labeling schemes, where
node re-label is permitted and performed by message passing. In this model they
obtain a compact labeling scheme for ancestry, while keeping the number of
messages small. Additional results in this setting include conversion methods
for static labeling schemes~\cite{korman2004labeling,korman2007general}, as
well as specialized distance~\cite{korman2007general,korman2007labeling} and
routing~\cite{korman2008improved,korman2009compact} labeling schemes.
See~\cite{Rotbart14} for experimental evaluation.

Considering the static setting, a natural question is to determine the label
size required to support some, or all, of the functions. Simply concatenating
the labels mentioned yield a $O(\log n)$ label size, which is clearly
undesired. Labeling schemes supporting multiple functions (or multi-functional
labeling schemes) were previously studied in \cite{Alstrup05}, showing an upper
bound of $\log n + 5\log\log n$ bits for combined adjacency and sibling
queries. \sd{We observe, that their scheme can be combined with the ideas of
\cite{Alstrup02} to produce a $\log + 2\log\log n$ labeling scheme for
adjacency and siblings.}
\drop{We show that most labeling schemes can be altered to also support
connectivity queries by adding an extra $\log\log n$ bits.
 Thm.~\ref{thm:static-combo-lb} shows that such a labeling scheme requires at
 least $\log n + 2 \log \log n$ bits -- even if it supports just connectivity
 and sibling/ancestry queries.}

See Table~\ref{table:complexities} for a summary of labeling schemes for
forests including the results of this paper.

\begin {table}[h]
	\begin{center}
	    \begin{tabular}{ | l | l | l | l | }
		    \hline
		   \textbf{Function } & \textbf{Static Label Size} & \textbf{Static Lower Bound}  & \textbf{Dynamic} \\ \hline\hline
		   Adjacency  	&$\log n +O(\log^* n)$~\cite{Alstrup02}  & $\log n
            +O(1)$ & {\color{blue} $2 \log n$~(Th.~\ref{thm:simple-dyn}}) \\ \hline
		   Connectivity  	&$\log n +\log \log n$~\cite{Alstrup05} & $\log n
            +\log \log n$~\cite{Alstrup05}  &  {\color{blue} $2 \log
        n$~(Th.~\ref{thm:simple-dyn}}) \\ \hline
		   Sibling  		&$\log n +\log \log
        n$~\cite{lewenstein2013succinct} & $\log n +\log \log
        n$~\cite{Alstrup05}  & {\color{blue} $2 \log n$~(Th.~\ref{thm:simple-dyn}}) \\ \hline
		   Ancestry   	&$\log n +4\log \log n$~\cite{Korman10} & $\log n + \log \log n$~\cite{Alstrup05} & $n$~\cite{cohen2010labeling} \\ \hline \hline
		  AD/S		& {\color{blue}  $\log n + 2\log \log n$~(Cor.~\ref{cor:adj_sib_static})}& 	$\log n + \log
        \log n$~\cite{Alstrup05}& {\color{blue} $2 \log n$~(Th.~\ref{thm:simple-dyn})} \\ \hline
		   C/S  		& {\color{blue} $\log n +2\log \log
    n$~(Th.~\ref{thm:static-con-alter-ub})} & $\log n +2\log \log
        n$~(Th.~\ref{thm:static-combo-lb}) & {\color{blue} $3 \log
    n$~(Th.~\ref{thm:combo_dyn})} \\ \hline
		   C/AN   		&{\color{blue} $\log n +5\log \log
n$~(Th.~\ref{thm:static-con-alter-ub})} & {\color{blue} $\log n + 2\log \log n$~(Th.~\ref{thm:con-anc})} &  $n$~\cite{cohen2010labeling} \\ \hline
		   C/AD/S   	& {\color{blue} $\log n + 3\log \log
    n$~(Cor.~\ref{cor:adj_sib_static})} & {\color{blue} $\log n + 2\log \log
    n$~(Th.~\ref{thm:static-combo-lb})} & {\color{blue} $3 \log
    n$~(Th.~\ref{thm:combo_dyn})} \\ \hline	\hline
		   Routing   	&$(1+o(1))\log n$~\cite{Thorup01}   &$\log n + \log \log n$~\cite{Alstrup05} & {\color{blue} $n$~(Sec.~\ref{Sec:Dynamic})}  \\  \hline
		   NCA   		&$2.772 \log n$~\cite{alstrup2013near}  & $1.008 \log n$~\cite{alstrup2013near}  &  {\color{blue} $n$~(Sec.~\ref{Sec:Dynamic})} \\ \hline
		    Distance	& $1/2 \log^2 n$~\cite{Peleg00}  & $1/8 \log^2 n$~\cite{Peleg00}  & {\color{blue} $n$~(Sec.~\ref{Sec:Dynamic})} \\  \hline\hline
		    Sibling*  		&$\log n$  & $\log n$  & $ \log n$ \\ \hline
		   Connectivity*  	&$\log n$ & $\log n $  & $ \log n$ \\ \hline
		    C/S*  	& {\color{blue} $\log n+ \log \log n
$~(Th.~\ref{thm:static-con-alter-ub})} & {\color{blue} $\log n+ \log \log
n$~(Th.~\ref{thm:con_sib_dyn_lb})}  & $ 2\log n$ \\ \hline
	    \end{tabular}
	 \end{center}
	 	 	\caption{Upper and lower  label sizes  for labeling  trees with $n$ nodes (excluding additive constants).
			 Routing  is reported in the designer-port model~\cite{Fraigniaud01} and NCA with no pre-existing labels~\cite{alstrup2013near},
			 functions marked with * denote non-unique labeling schemes, and bounds without a reference are folklore.
			 Dynamic labeling schemes are all tight.}
	\label{table:complexities}
\end {table}

\subsection{Our contribution}
We first \sd{observe} that for \sd{the dynamic setting, we can achieve efficient
labeling schemes for the functions adjacency, sibling, and connectivity
without the need of relabeling.}
\drop{the functions adjacency, sibling and connectivity no
relabeling is required to achieve efficient dynamic labeling schemes.}
More precisely, we observe that the original $2 \log n$ adjacency labeling
scheme due to  Kannan et. al.~\cite{Kannan92} is in fact suitable for the
dynamic setting. Moreover, the original labeling scheme also supports sibling
queries and a slightly modified scheme is shown to work for connectivity. We
also present simple families of insertion sequences for which labels of size
$2\log n$ are required, showing that in the dynamic setting the original
labeling schemes are in fact optimal. The result is in contrast to the static
case, where adjacency labels requires strictly fewer bits than both sibling and
connectivity. The labeling schemes also reveal an exponential gap between
ancestry and  the functions mentioned for the dynamic setting. In
Section~\ref{sec-other-graphs} we show a construction of simple lower bounds of
$\Omega(n)$ for adjacency labeling schemes on various important graph
families.

In the context of multi-functional labeling schemes, we show the following
results. First, we show that $3 \log n$ bits are necessary and sufficient for
any dynamic labeling scheme supporting adjacency and connectivity. Turning to
static labeling schemes, we show a tight $\log n + 2 \log \log n$ bound for
any unique labeling scheme supporting both connectivity and siblings/ancestry.
\sd{For the upper bound, we prove the more general result, that}
\drop{In order to show the upper bound, we prove that} any labeling scheme of size
$S(n)$ growing faster than $\log n$ can be altered to support connectivity as well by adding at most $\log\log n$ bits.
\sd{Coupled with our observation, that \cite{Alstrup02} and \cite{Alstrup05}
provide a $\log n + 2\log\log n$ scheme for adjacency and sibling, this
provides a $\log n + 3\log\log n$ labeling scheme for all the functions
adjacency, sibling and connectivity.}
\section{Preliminaries}
A binary string $x$ is a member of the set $\{0,1\}^*$, and we denote its size by $\vert x \vert$, and the concatenation of two binary strings $x,y$ by $ x \circ y$.

A \emph{label assignment}  for a tree $T$ is a mapping of each $v \in V$
	to a bit  string $\la(v)$, called  the \emph{label} of $v$.
Given  a tree $T=(V,E) $ rooted in $r$ with $n$ nodes, and let $u,v \in V$.
	The function $adjacency(v,u)$ returns \textbf{true} if and only if $u$ and $v$ are adjacent in $T$,
	$ancestry(v,u)$   returns \textbf{true} if and only if $u$ is on the path $r \leadsto v$,
	$siblings(v,u)$   returns \textbf{true} if and only if $u$ and $v$ have the same parent in $T$\footnote{By this definition, a node is a sibling to itself.},
	$routing(v,u)$   returns an identifier  of the edge connected to  $u$ on the path to  $v$,
	$NCA (v,u)$   returns the label of the first node in common on the paths $u \leadsto r$ and $v \leadsto r$,
	and $distance (v,u)$ returns the length of the path from $v$ to $u$.
	The functions mentioned previously are also defined for forests.
	Given a rooted forest $F$ with $n$ nodes, for any two nodes $u,v$ in $F$ the function  $connectivity(v,u)$ returns  \textbf{true}  if $v$ and $u$ are in the same tree in $F$.

	Given a function $f$ defined on sets of vertices,
	 an  \emph{ f-labeling scheme} for a family of graphs $\mathcal{G}$ consists   of  an encoder  and decoder.
	The  \emph{encoder}  is an algorithm that receives a graph $G \in
    \mathcal{G}$ as input and  computes a label assignment $e_G$. If the encoder receives $G$ as a sequence of topological events\footnote{Cohen et al. defines such a sequence as  a set of insertion of nodes into an initially empty tree, where the root is inserted first,and all other insertions are of the form ``insert node $u$ as a child of node $v$''.  We extend it to support ``remove leaf $u$'', where the root may never be deleted. } the labeling scheme is \emph{dynamic}.
	 The \emph{decoder}  is an algorithm  that receives any two labels $\la(v),\la(u)$  and  computes the query $d(\la(v),\la(u))$, such that  $d(\la(v),\la(u))=f(v,u)$. The \emph{size} of the labeling scheme is the maximum label size.
	 If for all graphs  $ G \in \mathcal{G}$, the label assignment  $e_G$ is an injective mapping, i.e. for all distinct $u,v \in V(G)$, $e_G(u) \neq e_G(v) $, we say that the labeling scheme assigns \emph{unique} labels. Unless stated otherwise, the labeling schemes presented are assumed to assign unique labels. Moreover, we allow the decoder to know the label size.

	 Let $H$ be a family of graphs, a graph $G \in H$, and suppose that an f-labeling scheme assigns a node $v \in G$ the label $\la(v)$.
	  If $\la(v)$ does not appear in any of the label assignments for the other graphs in $H$, we say that the label is \emph{distinct} for the labeling scheme over $H$.
All labeling schemes constructed in this paper require $O(n)$ encoding time and $O(1)$ decoding time under the assumption of a $\Omega(n)$ word size RAM model. See~\cite{Thorup01} for additional details.

\section{Dynamic labeling schemes}~\label{Sec:Dynamic}
We first note that the lower bound for ancestry due to Cohen, et.~al.~also
holds for NCA, since the labels computed by an NCA labeling scheme can decide ancestry:
Given the labels  $\la(u),\la(v)$ of two nodes $u,v$ in the tree $T$,  return
true if $\la(u)$ is equal to the label returned by the original NCA decoder, and
false otherwise.
Similarly, suppose a labeling scheme for routing\footnote{Routing in the
designer port model~\cite{Fraigniaud01}.} assigns $0$ as the port number on the
path to the root. Given  $\la(u),\la(v)$  as before, return true if
$routing(\la(u),\la(v)) \neq 0$ and $ routing(\la(v), \la(u)) =0$.
\drop{If there were to exist a dynamic labeling scheme for routing or NCA with
size $o(n)$, the labels produced would be sufficient to determine ancestry,
in contrast to Cohen's bound.}
Peleg~\cite{Peleg05} proved that any $f(n)$ distance labeling scheme can be
converted to $f(n)+\log (n)$ labeling scheme for NCA by  attaching the
depth of any node. Since the depth of a node inserted can not change in our
dynamic setting, we conclude that any lower
bound for ancestry also applies to distance, routing, and NCA.

\subsection{ Upper Bounds} \label{sec:upper-bounds-dynamic}
The following (static) adjacency labeling scheme was introduced by Kannan et al.~\cite{Kannan92}.
Consider an arbitrary rooted tree $T$ with $n$ nodes.
Enumerate the nodes in the tree with the numbers $0$ through $n-1$, and let,
for each node $v$, $Id(v)$ be the number associated with $v$. Let $parent(v)$ be the parent of a node $v$ in the tree.
The label of $v$ is $\la(v) = (Id(v),Id(parent(v)))$, and the root is labeled
$(0,0)$. Given the labels $\la(v),\la(v')$ of two nodes $v$ and $v'$, observe
that the two nodes are adjacent if and only if either $Id(parent(v))=Id(v')$ or
$Id(parent(v'))=Id(v)$ but not both, so that the root is not adjacent to
itself.

This is also  a dynamic labeling scheme for adjacency with equal label size.
 Moreover, it is also both a static and dynamic labeling scheme for sibling,
in which case, the decoder must check if $Id(parent(v))= Id(parent(v'))$.
 A labeling scheme for connectivity can be constructed by storing the
 component number rather than the parent id. After $n$ insertions, each label contains two parts, each in the range
 $[0,n-1]$.  Therefore, the label size required is $2 \log n$.

The labeling schemes suggested extend to larger families of graphs.
In particular, the dynamic connectivity labeling scheme  holds  for the family of all graphs.
The family of $k$-bounded degree graphs enjoys a similar dynamic adjacency labeling scheme of size $(k+1) \log n$.

\subsection{ Lower Bounds}
We  show that $2 \log n$ is in fact a tight bound for any dynamic adjacency labeling scheme for trees.
We denote  by $\mathcal{F}_n(k)$ an  insertion sequence   of  $n$ nodes, creating  an \emph{initial  path} of length $1 < k \leq n $, followed by  $n-k$  \emph{adjacent leaves}  to node $k-1$ on the path.
The family of all such insertions sequences is denoted $\mathcal{F}_n$. For illustration see Fig.~\ref{fig:AdjLowerBound}.

\begin{lemma}\label{lem:pers-anc}
Fix some dynamic labeling scheme that supports adjacency. For any $1 < k < n$, $\mathcal{F}_n(k)$ must contain at least $n-k$ distinct labels \drop{wrt.} for this labeling scheme over $\mathcal{F}_n$.
\end{lemma}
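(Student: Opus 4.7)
My plan is to exploit the online nature of the dynamic encoder: the label assigned at step $i$ depends only on the first $i$ operations, so two insertion sequences sharing a common prefix of length $p$ must produce identical labels for their first $p$ nodes. I will begin by identifying this shared prefix for $\sigma_k$ (the sequence defining $\mathcal{F}_n(k)$) and $\sigma_{k'}$. For $k<k'$, the two sequences agree through step $k+1$: at step $k+1$ both execute ``insert a child of node $k-1$'', which in $\sigma_k$ produces the first new leaf but in $\sigma_{k'}$ continues the path. They first diverge at step $k+2$, where $\sigma_k$ attaches another leaf to node $k-1$ but $\sigma_{k'}$ attaches a path node to node $k$. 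Hence the labels of the first $k+1$ inserted nodes of $\mathcal{F}_n(k)$ necessarily coincide with those of every $\mathcal{F}_n(k')$ with $k'\geq k$, while the remaining labels are not forced to.

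Next I will rule out accidental coincidence of any ``late'' label. Suppose for contradiction that the label $\ell$ of some leaf $u$ inserted after step $k+1$ in $\mathcal{F}_n(k)$ also appears in $\mathcal{F}_n(k')$, as the label of some node $w$. In $\mathcal{F}_n(k)$, $u$'s only neighbor is node $k-1$; let $a$ be its label. Since the decoder is a deterministic function of two labels and returns \emph{true} on $(\ell,a)$, $w$ must be adjacent in $\mathcal{F}_n(k')$ to the node carrying $a$. For $k'\geq k-1$ the label $a$ lies in the shared prefix and identifies node $k-1$ of $\mathcal{F}_n(k')$; this node has only two possible neighbors (nodes $k-2$ and $k$), and both of their labels also lie in the shared prefix. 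Uniqueness of labels within $\mathcal{F}_n(k)$ then forces $\ell$ into the shared prefix, contradicting that $u$ was inserted after step $k+1$. A symmetric analysis treats $k'<k-1$, ultimately giving the lemma's count.

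The main obstacle is precisely this sub-case $k'<k-1$, where the anchor label $a$ may be absent from $\mathcal{F}_n(k')$ altogether, so that the decoder's ``$\ell$ adjacent to $a$'' constraint imposes no direct restriction on $w$. I expect to close this case by enumerating the few possible positions of $w$ in the shallow path-plus-leaves structure of $\mathcal{F}_n(k')$: each such $w$ is either a path node $0,\ldots,k'-1$ (whose label sits in the shared prefix and whose index is ruled out by uniqueness) or a leaf attached to node $k'-1$. In the latter case the decoder's answer on the shared-prefix label of $w$'s actual neighbor, namely the label of node $k'-1$, contradicts the adjacency behavior already dictated by $\mathcal{F}_n(k)$, because in $\mathcal{F}_n(k)$ the label $\ell$ is adjacent only to the label of node $k-1$, never to the label of node $k'-1$ for $k'\neq k$.
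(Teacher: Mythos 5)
Your argument is essentially the paper's own proof: determinism of the encoder pins down the labels of the shared prefix, uniqueness within $\mathcal{F}_n(k)$ rules out a leaf label colliding with a path label, and the decoder's (necessarily consistent) answer on the pair consisting of a leaf label and the label of the attachment node of the shorter path separates the leaf labels of $\mathcal{F}_n(k)$ from those of every $\mathcal{F}_n(k')$. The one discrepancy is your indexing: the paper attaches the leaves to the \emph{second-to-last} path node (its proof explicitly uses that $P_{k-1}$'s path neighbours $P_{k-2}$ and $P_k$ both already occur on the initial path of $\mathcal{F}_n(k)$), so the insertion sequences already diverge at step $k+1$ and all $n-k$ leaf labels are distinct, whereas under your reading the first leaf inherits a path label and you only account for $n-k-1$ distinct labels --- an off-by-one that is harmless for Theorem~\ref{thm:simple-dyn}.
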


\begin{proof}
The labels of $\mathcal{F}_n(n)$ are set to  $P_1 \dots P_n$ respectively.
Since the encoder is deterministic, and since every insertion sequence
$\mathcal{F}_n(k)$ first  inserts nodes on the initial  path,these nodes must be labeled
$P_1 \dots P_k$. Let the labels of the adjacent leaves of such an insertion
sequence be denoted by $L^k_1 \dots L^k_{n-k}$.

			\begin{figure} [h]
				\centering
				\includegraphics[width=.3\textwidth]{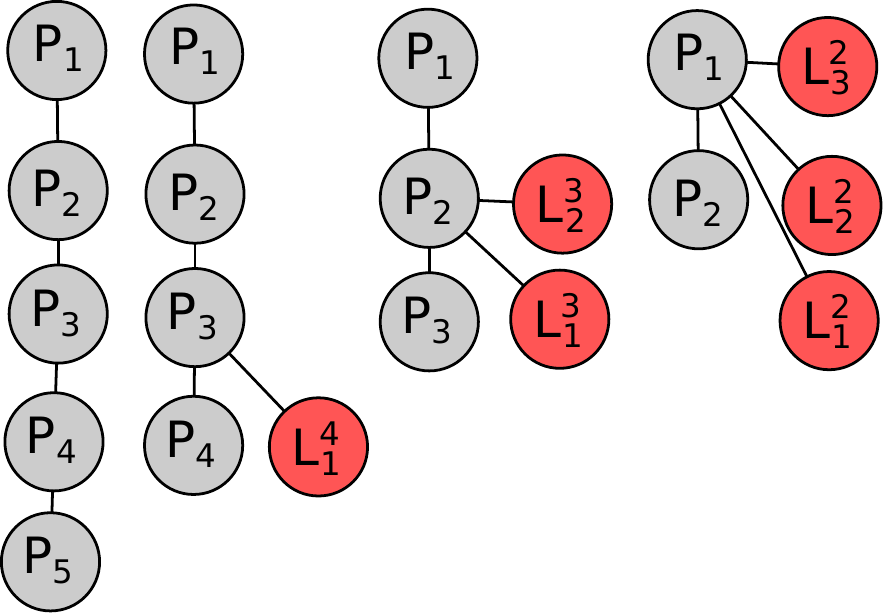}
				\caption{Illustration of $\mathcal{F}_5$. }
				\label{fig:AdjLowerBound}
			\end{figure}

Clearly, $L_1^{k}\dots L_{n-k}^{k}$ must be different from $P_1\dots P_n$, as the only other labels adjacent to $P_{k-1}$ are $P_{k-2}$ and $P_{k}$, which have already been used on the initial path.
 Consider now any node labeled $L_i^j$ of $\mathcal{F}_n(j)$ for $j\ne k$. Assume w.l.o.g that $j>k$. Such a node must be adjacent to $P_{j-1}$ and
 \emph{not} to $P_{k-1}$, as $P_{k-1}$ is contained in the path to $P_{j-1}$. Therefore we must have $L_i^j \notin \{L_1^{k}, \dots, L_{n-k}^{k}\}$.
\end{proof}

Identical lower bounds exist for both sibling and connectivity, see App.~\ref{lower-simple}.
\begin{theorem}\label{thm:simple-dyn}
Any dynamic labeling scheme supporting either  adjacency, connectivity, or sibling  requires at least $ 2 \log n -1 $ bits.
\end{theorem}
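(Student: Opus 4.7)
My plan is to derive the bound by counting how many distinct labels the scheme is forced to produce across the whole family $\mathcal{F}_n$, and then to invoke a pigeonhole-style argument: if many distinct labels appear, the label length must be correspondingly large.

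First, I would exploit Lemma~\ref{lem:pers-anc} in aggregate. For each $k$ with $2 \le k < n$, the lemma supplies $n-k$ labels that appear in $\mathcal{F}_n(k)$ but in no other sequence of $\mathcal{F}_n$. In addition, the path sequence $\mathcal{F}_n(n)$ contributes the $n$ labels $P_1,\dots,P_n$, which are distinct from every $L^k_i$ since each $L^k_i$ is adjacent to $P_{k-1}$ and therefore cannot equal $P_{k-2}$, $P_{k-1}$, $P_{k}$, or any other $P_j$ with $|j-(k-1)|\neq 1$ (non-adjacency to $P_{k-1}$ would then fail). Summing the contributions gives a total of
\[
n + \sum_{k=2}^{n-1}(n-k) \;=\; n + \binom{n-1}{2} \;\ge\; \frac{n(n-1)}{2}
\]
distinct bit strings that the encoder must produce. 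Since all labels are bit strings of a common maximum length $s$, we need $2^s \ge n(n-1)/2$, which gives $s \ge 2\log n - 1$ and therefore establishes the adjacency lower bound.

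For siblings and connectivity I would reuse the same template but with slightly different insertion families (the ones the paper defers to App.~\ref{lower-simple}). For siblings, one can use a family where, after building an initial path, the $(k-1)$-th node on the path accepts $n-k$ new children; these children must share a parent-id that distinguishes them from the children created for a different choice of $k$. For connectivity, one uses a family of $n$-node forests that share a prefix of tree roots and then plant additional leaves in a selected tree; any node inserted as part of tree $k$'s extension is connected to the distinguished root of tree $k$ and not to any other root, forcing distinct labels across different $k$. In each case the same counting of $\Omega(n^2)$ distinct labels goes through and yields the $2\log n - 1$ lower bound.

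The main obstacle is not the counting itself, which is a direct corollary of the lemma, but verifying the analogues of Lemma~\ref{lem:pers-anc} for siblings and for connectivity: one has to design an insertion family in which the query (sibling or connectivity) between two labels forces them to belong to a unique sequence of the family, so that labels used for different sequences cannot be reused. Once those analogues are in place, the counting step is essentially a one-line application of $n(n-1)/2 \le 2^s$.
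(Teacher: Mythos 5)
Your proposal is correct and follows essentially the same route as the paper: the paper's proof of Theorem~\ref{thm:simple-dyn} likewise sums the distinct-label counts guaranteed by Lemma~\ref{lem:pers-anc} over all $k$ to obtain $n+\sum_{i=2}^{n-1} i = n^2/2+O(n)$ distinct labels, and defers the sibling and connectivity analogues (using the same path family and the forest family $\mathcal{F}^c_n$, respectively) to an appendix, exactly as you sketch. The final pigeonhole step from $\Omega(n^2)$ distinct labels to $2\log n - 1$ bits is identical.
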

\begin{proof}
According to Lem.~\ref{lem:pers-anc}, at least $n+ \sum_{i=2}^{n-1} i =  n^2/2
+O(n)$ distinct labels are required to label $\mathcal{F}_n$ if adjacency or
sibling requests are supported, and the same applies for $\mathcal{F}^c_n$  if
connectivity is supported.
\end{proof}

A natural question is whether a randomized labeling scheme could provide labels of size less than $2 \log n - O(1)$.
The next theorem, based on Theorem~3.4 in \cite{cohen2010labeling} answer this question negatively. The proof is deferred to Appendix~\ref{proof:random}.
\begin{theorem}\label{thm:random}
	For any randomized dynamic labelling scheme supporting either
	adjacency, connectivity, or sibling queries
	there exists an insertion sequence such that the expected value
	of the maximal label size is at least $2 \log n - O(1)$ bits.
\end{theorem}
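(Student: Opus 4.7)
I would apply Yao's minimax principle: it suffices to exhibit a distribution $D$ on insertion sequences such that every deterministic encoder satisfies $\mathbb{E}_{I\sim D}[\max_v |\la(v)|] \geq 2\log n - O(1)$; then for every randomized dynamic scheme there is a fixed input on which the expected maximum label size is at least this value. For the adjacency case (and similarly for sibling and connectivity, using the analogues of Lemma~\ref{lem:pers-anc} from App.~\ref{lower-simple}), I take $D$ to be the $k$-marginal of the uniform distribution on the set $\{(k,i) : 2\le k\le n-1,\ 1\le i\le n-k\}$ of size $N=\binom{n-1}{2}$; equivalently, $D$ places weight $(n-k)/N$ on the sequence $\mathcal{F}_n(k)$.

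Fix any deterministic encoder and let $L^k_i$ denote the label it assigns to the $i$-th adjacent leaf in $\mathcal{F}_n(k)$. Lemma~\ref{lem:pers-anc} shows that these $N$ labels are pairwise distinct as $(k,i)$ ranges over the index set: any two come either from different sequences (distinct by the lemma) or from the same sequence (distinct because labels are unique within a sequence). So $\{L^k_i\}$ is a set of $N$ distinct binary strings.

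The key estimate is a standard source-coding-style bound: if $U$ is uniform over a set of $N$ distinct binary strings, then $\mathbb{E}[|U|]\ge \log N - O(1)$, because at most $2^{\ell+1}$ binary strings have length $\le\ell$, so $\Pr[|U|\le\ell]\le 2^{\ell+1}/N$, hence $\mathbb{E}[|U|]=\sum_{\ell\ge 0}\Pr[|U|>\ell]\ge \log N - O(1)$. Applied to a uniformly random $(k,i)$, this gives $\mathbb{E}_{(k,i)}[|L^k_i|]\ge \log\binom{n-1}{2} - O(1)\ge 2\log n - O(1)$. Since $|L^k_i|\le \max_i|L^k_i|\le\max_v|\la(v)|$ pointwise within $\mathcal{F}_n(k)$, conditioning first on $k$ yields $\mathbb{E}_{k\sim D}[\max_v|\la(v)|]\ge \mathbb{E}_{(k,i)}[|L^k_i|]\ge 2\log n - O(1)$, and Yao's principle completes the argument.

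The main technical care will be in the sibling and connectivity cases: one has to verify that the analogues of Lemma~\ref{lem:pers-anc} also give pairwise distinctness of the leaf labels across \emph{all} pairs $(k,i)$, not merely within a single sequence, so that the counting argument based on $N=\binom{n-1}{2}$ distinct strings applies uniformly across the three functions. Once this distinctness is established, the remainder of the proof is a routine invocation of the counting bound and the minimax inequality.
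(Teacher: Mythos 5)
Your proof is correct, and it rests on the same two pillars as the paper's own argument: Lemma~\ref{lem:pers-anc} (with its sibling and connectivity analogues) to certify a quadratic number of pairwise-distinct leaf labels across the family, and Yao's principle to pass from a hard input distribution to randomized schemes. Where you genuinely differ is in the choice of distribution and the averaging step. The paper draws $k$ uniformly from the sequences $\mathcal{F}_n(k)$ with $1<k<n/2$, so that each sequence contributes more than $n/2$ distinct labels, then sorts the $\Theta(n)$ sequences by the maximum length of their distinct labels and argues that the $i$-th sequence in this order must carry a label of length at least $\log(in/2)$; averaging gives $\log n + \log\abs{F_n} - O(1) = 2\log n - O(1)$. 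You instead weight $\mathcal{F}_n(k)$ proportionally to its number of leaves, which makes the pair $(k,i)$ uniform over all $\binom{n-1}{2}$ pairwise-distinct leaf labels, and invoke the standard bound that a uniform element of a set of $N$ distinct binary strings has expected length at least $\log N - O(1)$. Your route avoids both the restriction $k<n/2$ and the sorting trick, at the price of a non-uniform input distribution; the paper's route keeps the distribution uniform but needs the ordering argument to extract a long label from each sequence. Both give $2\log n - O(1)$. The one concern you flag --- that cross-sequence distinctness of leaf labels must also hold for siblings and connectivity, not just within a single sequence --- is exactly what App.~\ref{lower-simple} establishes (for siblings via the sibling relation to $P_k$, for connectivity via the family $\mathcal{F}^c_n$), so nothing is missing from your argument.
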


\subsection{Other Graph Families}\label{sec-other-graphs}
In this section, we expand our lower bound ideas to adjacency labeling schemes for the  following families: bounded arboricity-$k$ graphs\footnote{The \emph{arboricity} of a graph $G$  is the minimum number of edge-disjoint acyclic subgraphs whose union is $G$.} $\mathcal{A}_k$,  bounded degree-$k$ graphs $\Delta_k$, and  bounded treewidth-$k$ graphs $\mathcal{T}_k$.

In the context of (static) adjacency  labeling schemes,  these families  are   well studied~\cite{Kannan92,Alstrup02,gavoille2007shorter,chung1990universal,Adjiasvhili14}
In particular,  $\mathcal{T}_k$, $\Delta_k$ and  $\mathcal{A}_k$  enjoy  adjacency labeling schemes of size $\log \log (n/k))$ \cite{gavoille2007shorter}, and $k \log n+O(\log^* n)$ \cite{Alstrup02} respectfully.

We consider a sequence of node insertions along with all edges adjacent to them, such that an edge $(u,v)$ may be introduced along with node $v$ if  node $u$ appeared prior in the sequence, and prove the following.
\begin{theorem}\label{theorem:lower-arboricity}
 Any dynamic adjacency labeling scheme for  $\mathcal{A}_2$ requires $\Omega(n)$ bits.
\end{theorem}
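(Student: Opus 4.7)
The plan is an encoding argument: we exhibit a common insertion-sequence prefix of length about $n/2$ after which a single additional node is forced to distinguish all $2^{n/2}$ subsets of the previously inserted vertices in its label, yielding the $\Omega(n)$ bound. Concretely, fix any dynamic adjacency labeling scheme for $\mathcal{A}_2$, set $m = \lfloor n/2 \rfloor$, and consider insertion sequences that begin by introducing $v_1,\ldots,v_m$ as isolated vertices. Because the encoder is deterministic, the labels $\la(v_1),\ldots,\la(v_m)$ are then fixed once and for all, independently of how the sequence continues. For each subset $S \subseteq [m]$, extend the prefix by inserting a fresh node $u_S$ together with the edges $\{(u_S,v_i) : i \in S\}$, and then pad with $n-m-1$ further isolated insertions so that each sequence has exactly $n$ nodes. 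The resulting graph is the disjoint union of a single star and some isolated vertices, hence a forest, so it lies in $\mathcal{A}_1 \subseteq \mathcal{A}_2$, and the insertion sequence is valid in the model described just above the theorem.

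Let $\la(u_S)$ denote the label the scheme assigns to $u_S$. Given $\la(u_S)$ together with the already-committed labels $\la(v_1),\ldots,\la(v_m)$, the decoder recovers $S$ by evaluating the adjacency query between $u_S$ and each $v_i$ and collecting the positive answers. Hence the map $S \mapsto \la(u_S)$ is an injection of the $2^m$ subsets of $[m]$ into the set of bitstrings used by the scheme, so some $\la(u_S)$ must have length at least $\log_2 2^m = m = \Omega(n)$.

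The step to watch is the arboricity constraint, and our construction sidesteps it entirely by staying inside $\mathcal{A}_1$: nothing in the proof exploits the extra slack of arboricity $2$, the only essential ingredient being that the dynamic model for general graphs permits a newly inserted node to arrive together with an arbitrary set of edges to previously inserted nodes. This is precisely the sequence-of-events definition stipulated just before the theorem, and it is what separates this setting from the dynamic tree setting of Thm.~\ref{thm:simple-dyn}, in which each inserted node carries exactly one edge to its parent and the encoding bottleneck therefore collapses to $2\log n$.
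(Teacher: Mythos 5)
Your proof is correct and is essentially the same encoding argument as the paper's: a deterministic common prefix fixes the labels of the first batch of nodes, and the final node's label must then determine an arbitrary subset of them via adjacency queries, forcing $\Omega(n)$ bits. The only cosmetic difference is that the paper's prefix is a path of $n-1$ nodes (so the final graph genuinely has arboricity $2$ and the bound is $n-1$ rather than your $n/2$), whereas you use isolated vertices and land in $\mathcal{A}_1$; both are valid in the stated insertion model.
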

\begin{proof}
Let $S$ be the collection of all  nonempty  subsets  of  the integers $1 \dots n-1$.
Since there are $2^{n-1}-1$ such sets possible, $\vert S \vert = 2^{n-1}-1$.
For every $s \in S$, we denote  by $\mathcal{F}_n(s)$ an  insertion sequence   of  $n$ nodes, creating  a path of length $n-1$, followed by  a single node $v$ connected to the nodes on the path whose number is a member of $s$.
Such a graph has  arboricity $2$ since it can be decomposed into an initial path and a star rooted in $v$.
For each of the $|S|$  insertion sequences, the label of $v$ must be distinct.
We  conclude that the number of bits required for any adjacency labeling scheme
is at least $\log(|S|)= n-1$ bits.
See Fig.~\ref{fig:AdjLowerBoundGraphs} for illustration. 
\end{proof}
\vspace*{-5ex}
			\begin{figure}
				\centering
				\includegraphics[width=.25\textwidth]{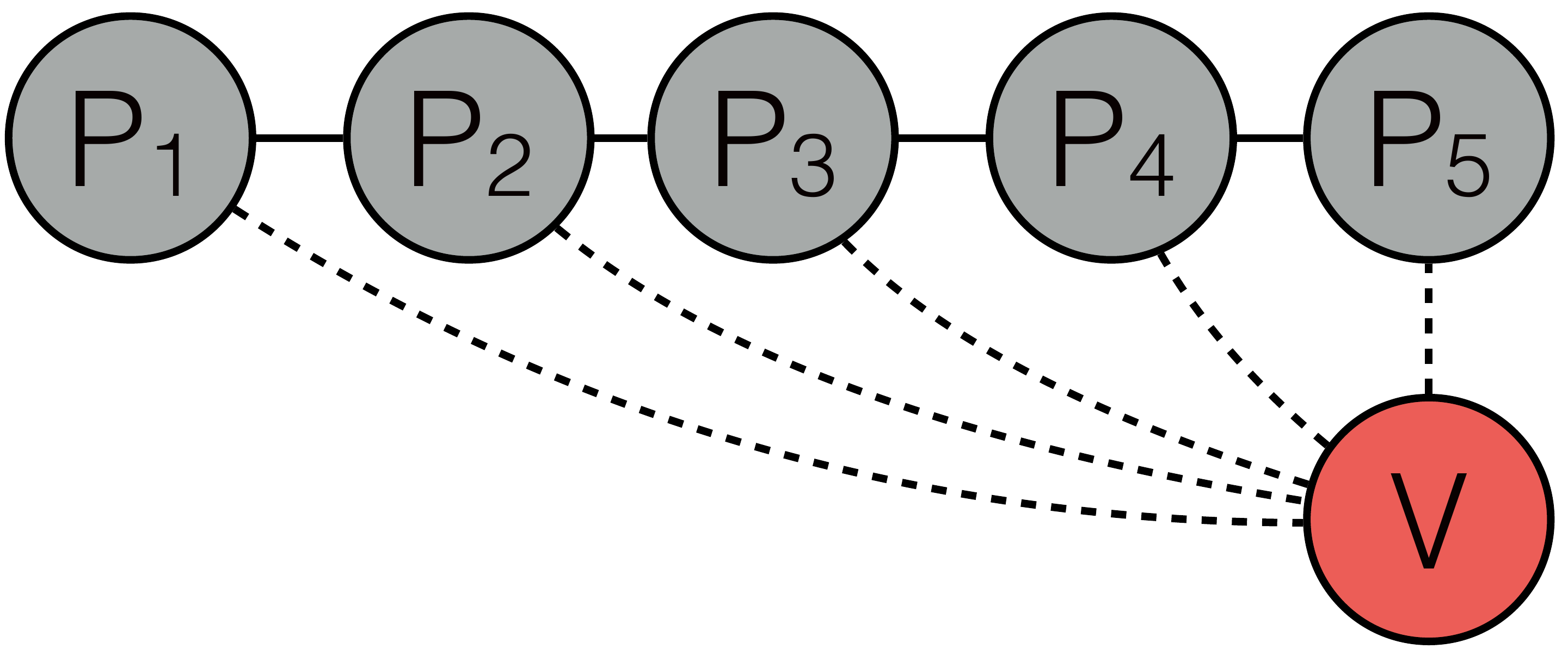}
				\caption{Illustration of $\mathcal{F}(s)$ for $n=5$. The dotted lines may or may not appear in the insertion sequence depending on the element of $S$ chosen. }
				\label{fig:AdjLowerBoundGraphs}
			\end{figure}
\vspace*{-2ex}
The construction of  $\mathcal{F}_n(s)$ implies an identical lower
bound for the family of planar graphs, as well as interval graphs. By
 considering all sets $s$ of at most $k$ elements instead, we get
a bound of $k \log n$ label size for any adjacency labeling scheme for $\Delta_k$, where $k$ is constant.

To show a similar bound on $\mathcal{T}_k$, we prove  that the sequence of insertions creates graphs in $\mathcal{T}_3$.
For every face $R$ in  a planar embedding $M$ of a  planar graph $G$, define
$g(R)$ to be the minimum value of $k$, such that there is a sequence of faces
$R_0 \dots R_k$, with $R_0$ the exterior face, and $R_k = R$, and for $1 \leq j
\leq k$, there is a vertex $v$ that is both on face $R_{j-1}$ and $R_{j}$. The
radius of $M$ is the minimum value of $g$ such that $g(R) \leq g$ for all
regions $R$ of $M$.
\begin{lemma}\cite{bodlaender1988dynamic}\label{lemma:bodlander}
Let $G=(V,E)$ be a planar graph with radius $\leq g$, $ g \geq 1$, then $G$ has treewidth at most $3d$.
\end{lemma}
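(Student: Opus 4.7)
The plan is to prove this by face-layering. For each $i \in \{0, 1, \ldots, g\}$, define the vertex layer $V_i$ to consist of all vertices that lie on some face $R$ with $g(R) = i$ but on no face of smaller $g$-value. The radius hypothesis gives $V = V_0 \cup \cdots \cup V_g$, and the chain-of-faces definition forces every edge of $G$ either to lie within a single layer or to bridge two consecutive layers $V_i$ and $V_{i+1}$. After deleting $V_0 \cup \cdots \cup V_{i-1}$, the resulting planar embedding inherited from $M$ has its outer face bounded by vertices drawn from $V_i$, which reduces the situation to the classical $k$-outerplanar setting with $k \leq g+1$.

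Next I would invoke the standard construction used to bound the treewidth of $k$-outerplanar graphs. For each annular region between $V_i$ and $V_{i+1}$, sweep around the outer boundary cycle to build a bounded-width path decomposition, placing in every bag the $O(1)$ interface vertices currently touched. Gluing these layer-wise decompositions along their shared boundary cycles yields a global tree decomposition whose maximum bag size is controlled by three times the number of layers, matching the claimed bound (possibly after absorbing one layer's worth of slack, as Bodlaender's original accounting does).

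The main obstacle is that $g(\cdot)$ is defined by chains of faces sharing merely a vertex, not necessarily an edge, so a single \emph{pinch} vertex may lie on faces of very different depths and the peeled subgraph need not be genuinely outerplanar. The cleanest remedy is to temporarily split each such pinch vertex into one copy per incident face-component, run the outerplanar argument on this normalized embedding, and then re-identify the duplicates when assembling the final bags; this identification costs at most one additional vertex per layer, which is already absorbed into the $3g$ budget. The rest of the proof is bookkeeping: verifying that the three tree-decomposition axioms hold across the gluing seams and checking that no bag grows beyond the stated width.
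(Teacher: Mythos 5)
First, be aware that the paper offers no proof of this lemma: it is imported verbatim from Bodlaender's 1988 work (the statement even carries a typo, $3d$ where the hypothesis uses $g$), so there is no internal argument to compare yours against and your sketch has to stand on its own. The high-level route you chose is the standard one --- peel face-layers, observe that edges join only consecutive layers (that part of your sketch is correct and easy to verify), and reduce to the $k$-outerplanar case --- but it contains a genuine quantitative gap. With the paper's definition, in which faces are ``adjacent'' when they share a \emph{vertex}, a graph of radius $g$ peels into up to $g+1$ outerplanarity layers (already the wheel has radius $1$ but needs two layers), and the black-box bound for $k$-outerplanar graphs is treewidth at most $3k-1$, i.e.\ $3g+2$ here. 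That overshoots the claimed $3g$, and the wheel shows $3g$ can be tight, so the phrase ``possibly after absorbing one layer's worth of slack, as Bodlaender's original accounting does'' is precisely the step that would have to be proved; it is not bookkeeping, and the $3g$ bound does not factor through the generic outerplanarity bound in the way you describe.

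Beyond the constant, two further steps are asserted rather than argued. The claim that each annulus admits a ``bounded-width path decomposition, placing in every bag the $O(1)$ interface vertices currently touched'' is the hardest component of the theorem, not a consequence of the setup: a layer need not be a disjoint union of cycles (it can be disconnected and can have trees hanging off it), and bounding the interface between a position on layer $i$ and the relevant vertices of layers $i\pm 1$ is exactly where the spanning-forest and degree-reduction machinery of the real proof is needed. Likewise, the pinch-vertex repair is not sound as stated: re-identifying split copies is a quotient operation, and to turn a tree decomposition of the split graph into one of $G$ you must enlarge \emph{every} bag meeting any copy so that it contains the merged vertex; the cost is governed by how many distinct pinch vertices a single bag can meet, not by ``one additional vertex per layer,'' and you give no bound on that. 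None of these gaps necessarily dooms the approach, but together they contain essentially all of the content of Bodlaender's proof, and your write-up currently delegates all of them to bookkeeping.
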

The lemma is useful for our purposes since  the graphs in the  family of  planar graphs resulting from  $\mathcal{F}(s)$ have radius $1$.
\begin{corollary}
 Any dynamic adjacency labeling scheme for $\mathcal{T}_k$, where  $k \geq 3$, requires $\Omega(n)$ bits.
\end{corollary}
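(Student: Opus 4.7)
The plan is to recycle the insertion sequences $\mathcal{F}_n(s)$ used in the proof of Theorem~\ref{theorem:lower-arboricity} and argue that every graph they produce already lies in $\mathcal{T}_3$, hence in $\mathcal{T}_k$ for every $k\ge 3$. Once this is in place the counting argument transfers verbatim: the $2^{n-1}-1$ distinct subsets $s$ force $2^{n-1}-1$ distinct labels on the last-inserted vertex $v$, so at least $n-1$ bits are necessary.

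First I would exhibit a concrete planar embedding of the graph $G_s$ produced by $\mathcal{F}_n(s)$. Draw the initial path $1\text{--}2\text{--}\cdots\text{--}(n-1)$ horizontally, place $v$ above it, and route each edge from $v$ to a vertex of $s$ as a non-crossing curve in the upper half-plane, in left-to-right order of the endpoints on the path. This is evidently a planar embedding $M$, and by construction $v$ lies on the boundary of the outer (unbounded) face.

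Next I would bound the radius of $M$. Every bounded face of $M$ is enclosed by two consecutive edges emanating from $v$ together with a sub-path of the base path, so $v$ lies on its boundary; and $v$ also lies on the outer face. Thus, writing $R_0$ for the exterior face, for every face $R$ the length-$1$ sequence $R_0,R$ is valid in the definition of $g(R)$ since the two faces share the vertex $v$. Hence $g(R)\le 1$ for every face, and the radius of $M$ is at most $1$. Applying Lemma~\ref{lemma:bodlander} with $g=1$ yields treewidth at most $3$, so $G_s\in\mathcal{T}_3\subseteq\mathcal{T}_k$.

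The remainder is the counting argument from Theorem~\ref{theorem:lower-arboricity}: the neighbourhood of $v$ in $G_s$ equals $s$, so for two different subsets $s\neq s'$ no common label for $v$ can simultaneously support adjacency in $G_s$ and $G_{s'}$; hence the $|S|=2^{n-1}-1$ sequences require $|S|$ distinct labels for $v$, giving the $\Omega(n)$ bound. The only real obstacle is the embedding step — in particular handling the degenerate cases where $|s|\le 1$ or $s$ contains an endpoint of the path, where some of the "bounded" faces collapse; but $v$ remains on the boundary of every surviving face, so the radius bound is unaffected.
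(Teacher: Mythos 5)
Your proposal is correct and follows essentially the same route as the paper: the paper also observes that the graphs produced by $\mathcal{F}_n(s)$ are planar with radius $1$, invokes Lemma~\ref{lemma:bodlander} to place them in $\mathcal{T}_3$, and then inherits the counting argument of Theorem~\ref{theorem:lower-arboricity}. You merely spell out the planar embedding and the radius-$1$ verification in more detail than the paper does.
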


\section{Multi-Functional Labeling schemes}
In this section we investigate labeling schemes incorporating  two or more of the functions mentioned.
\subsection{Dynamic Multi-Functional Labeling Schemes}
A dynamic labeling scheme for answering any combination of connectivity, adjacency and sibling queries at the same time can be obtained by setting  $\la(v) = (Id(v), Id(parent(v)), component(v))$ as described in Section~\ref{sec:upper-bounds-dynamic}  which result in a $3 \log n$ labeling scheme.

 We  now show that this upper bound is in fact is tight.
 More precisely, we show that $3\log n$ bits are required to answer  the combination of connectivity and adjacency.
  Let $I_n(j,k)$ be an insertion sequence designed as follows: First $j$ nodes are inserted creating an \emph{initial
forest} of single node trees. Then $k$ nodes are added as a path with root in
the $j$th tree. At last, $n-j-k$ adjacent \emph{path leaves} are added to the
second-to-last node on the path. For a given $n$ we define $I_n$ as the family
of all such insertion sequences. See Fig.~\ref{fig:combi_lb} for reference.
\begin{figure}[h]
    \centering
    \includegraphics[width=.6\textwidth]{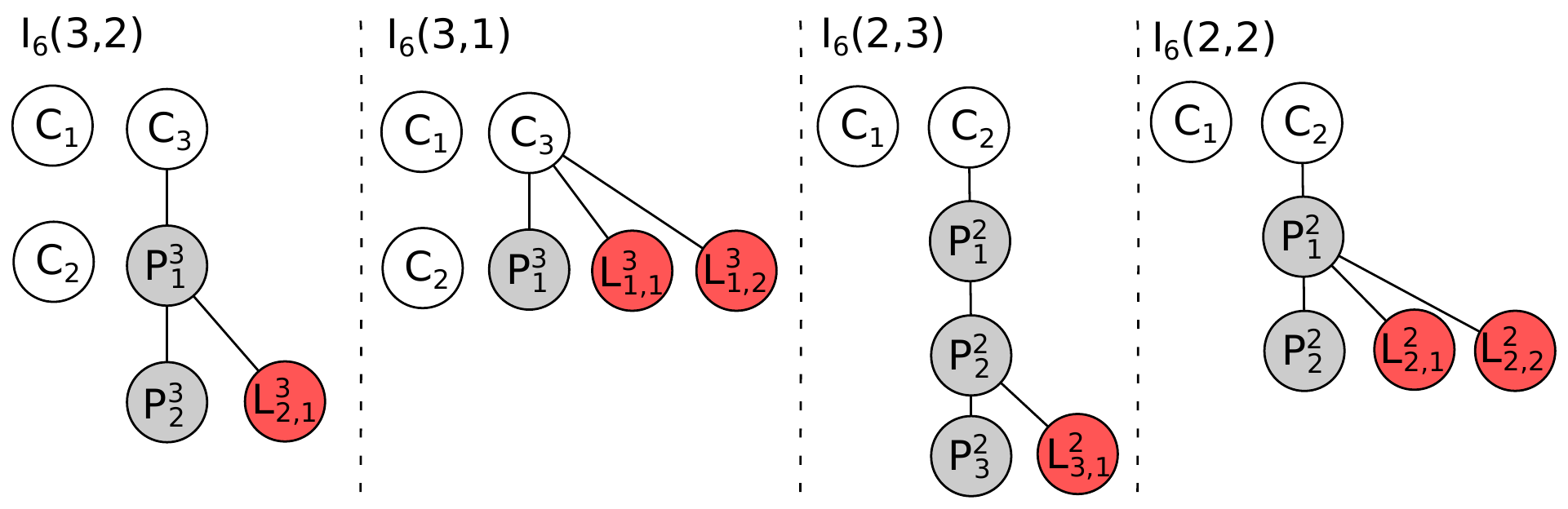}
    \caption{Illustration of $I_n(j,k)$ for specific values of $j$, $k$, and
    $n$.}
    \label{fig:combi_lb}
\end{figure}

\begin{lemma}\label{lem:combi_lb}
   Fix some dynamic labeling scheme that supports adjacency and connectivity requests.
    For any $1 < j+k < n$, $I_n(k)$ must contain at least $n-j-k$ distinct labels for this labeling scheme over $I_n$.
\end{lemma}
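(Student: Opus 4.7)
The plan is to identify $n-j-k$ labels in $I_n(j,k)$—namely the path leaves—that the encoder does not use on any other sequence in $I_n$. Two preliminary facts set this up: by determinism of the encoder, the labels $T_1,\ldots,T_j$ of the initial singletons and (for fixed $j$) the path-node labels $P^j_1,P^j_2,\ldots$ coincide across every $I_n(j,\cdot)$; and by uniqueness of labels inside a single graph, the $n-j-k$ leaf labels $L^{(j,k)}_1,\ldots,L^{(j,k)}_{n-j-k}$ of $I_n(j,k)$ are pairwise distinct and different from every $T_\ell$ and $P^j_m$ that appear in $I_n(j,k)$.

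Fix a leaf label $L=L^{(j,k)}_i$ and suppose $L$ also appears in some $I_n(j',k')\neq I_n(j,k)$. If $j'>j$, the node labelled $T_j$ is one of the initial isolated singletons of $I_n(j',k')$ and never acquires a neighbour; the connectivity decoder on $(L,T_j)$, which is forced to output \textbf{true} because $L$ is connected to $T_j$ in $I_n(j,k)$, can then only be satisfied by $L=T_j$, contradicting uniqueness inside $I_n(j,k)$. For $j'<j$ a symmetric query against $T_{j'}$ works: $L$ is disconnected from $T_{j'}$ in $I_n(j,k)$, and the only labels in $I_n(j',k')$ disconnected from $T_{j'}$ are the earlier singletons $T_1,\ldots,T_{j'-1}$, each of which uniqueness inside $I_n(j,k)$ forbids $L$ from equaling.

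The remaining case $j'=j$, $k'\neq k$ is the direct analogue of Lemma~\ref{lem:pers-anc}: the path node $P^j_{\min(k,k')-1}$ lies in the common prefix of the two insertion sequences and thus carries the same label in both, and an adjacency query against this label separates $L$ (which is adjacent to $P^j_{k-1}$ in $I_n(j,k)$) from every label used in $I_n(j,k')$—the leaves of $I_n(j,k')$ hang off the wrong end of the path, the path-extension nodes are too far away to be adjacent, and the earlier path nodes are already ruled out by uniqueness inside $I_n(j,k)$. The main obstacle is the interaction between the two types of queries: adjacency alone handles variations in $k$ much as in Lemma~\ref{lem:pers-anc}, but distinguishing different values of $j$ genuinely requires connectivity, so the argument inherently relies on the scheme supporting \emph{both} functions.
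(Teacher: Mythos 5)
Your proof is correct and takes essentially the same route as the paper's: both exploit determinism to fix the labels of the common prefix, then use a connectivity query against the $j$th singleton's label to separate sequences with different $j$ and an adjacency query against a shared path-node label to separate sequences with the same $j$ but different $k$, falling back on uniqueness within $I_n(j,k)$ for the remaining collisions. Your version is marginally more explicit in ruling out collisions of a leaf label with \emph{every} label of the other sequence (not just its leaves), but the underlying argument is the same.
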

The  proof of  Lem.~\ref{lem:combi_lb} is found in  App.~\ref{Lemma-combi-bi}.

\begin{theorem}\label{thm:combo_dyn}
    Any dynamic labeling scheme supporting both adjacency and connectivity
    queries requires at least $3\log n - O(1)$ bits.
\end{theorem}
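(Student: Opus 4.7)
The plan is to turn Lemma~\ref{lem:combi_lb} into a counting bound on how many distinct bit strings the encoder is forced to produce across the family $I_n$, and then observe that if that count is $\Omega(n^3)$ then some label must have length at least $3\log n - O(1)$.

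First I would enumerate the legal parameter pairs. For every pair of positive integers $(j,k)$ with $j+k\le n-1$, the sequence $I_n(j,k)$ belongs to $I_n$, and Lemma~\ref{lem:combi_lb} guarantees that $I_n(j,k)$ contains at least $n-j-k$ labels that appear in no other sequence of the family. By the definition of ``distinct for the scheme over $I_n$'' given in the Preliminaries, these private labels belonging to different pairs $(j,k)$ are pairwise disjoint as bit strings, so the total number of distinct labels the encoder must use across $I_n$ is at least
\[
N \;\ge\; \sum_{\substack{j,k\ge 1 \\ j+k \le n-1}} (n-j-k).
\]

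Next I would estimate this sum. Grouping by $s=j+k$ yields $\sum_{s=2}^{n-1}(s-1)(n-s)$, which is clearly $\Theta(n^3)$ (a short calculation even gives the exact value $\binom{n}{3}$, but only the order of magnitude is needed). Since $N$ distinct binary strings require some string of length at least $\log N$, we conclude that the maximum label size is at least $\log\!\bigl(\Theta(n^3)\bigr) = 3\log n - O(1)$, which is the claim.

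The only delicate point in this plan is the first one: I need the private labels contributed by different parameter pairs to be distinct as bit strings, not merely distinct within each graph. This is exactly the content of the ``distinct label'' definition used in Lemma~\ref{lem:combi_lb}, so no extra argument is necessary once the lemma is invoked. The remaining steps are pure arithmetic, mirroring the way Theorem~\ref{thm:simple-dyn} was derived from Lemma~\ref{lem:pers-anc}, except that the two-parameter family gives a cubic rather than quadratic lower bound and hence one more $\log n$ term.
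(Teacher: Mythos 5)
Your proposal is correct and follows essentially the same route as the paper: invoke Lemma~\ref{lem:combi_lb} for every admissible pair $(j,k)$, sum the guaranteed distinct labels to get $\sum_{j,k\ge 1,\, j+k\le n-1}(n-j-k) = \Theta(n^3)$ (the paper writes this as $\tfrac{1}{6}n^3 - O(n^2)$), and conclude a $3\log n - O(1)$ lower bound on label length. Your extra care in checking that ``distinct over $I_n$'' makes the private label sets of different parameter pairs pairwise disjoint is exactly the (implicit) justification the paper relies on.
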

\begin{proof}
	According to Lem.~\ref{lem:combi_lb} at least
	$\sum_{j=1}^{n-1}\sum_{k=1}^{n-j-1} n-j-k = \frac{1}{6}n^3 - O(n^2)$ distinct labels
	are required to label the family $I_n$. Thus a label size of at least $3\log n
	- O(1)$ bits is needed by any dynamic labeling scheme.
\end{proof}

The same family of insertion sequences can be used to show a $3\log n - O(1)$
lower bound for any dynamic labeling scheme supporting both sibling and
connectivity queries. Furthermore, similarly to Theorem~\ref{thm:random}, the
bound holds even without the assumption that the encoder is deterministic.

\subsection{Static Multi-Functional   labeling schemes}~\label{sec:static-multi}
As seen in Thm.~\ref{thm:combo_dyn}, the requirement to support both connectivity and adjacency force an increased label size for any dynamic labeling scheme.
In this section we prove  lower and upper bounds for static labeling schemes that support those operations, both for the case where the labels are necessarily unique, and for the case that they are not. From hereon, all labeling schemes are on the family of rooted forests with at most $n$ nodes.

\begin{theorem}\label{thm:static-con-alter-ub}
    Consider any function $f$ of two nodes in a single tree. If there exists an
    $f$-labeling scheme of size $S(n)$, where $S(n)$ is non-decreasing and
    $S(a) - S(b) \ge \log a - \log b - O(1)$ for any $a\ge b$.
    Then there exists an $f$-labeling scheme, which also supports connectivity
    queries of size at most $S(n) + \log\log n + O(1)$.
\end{theorem}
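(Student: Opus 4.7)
The plan is to partition $F$ into its trees, bucket the trees by size class, and spend $\log\log n + O(1)$ extra bits per label to identify the containing tree, so that the given $f$-scheme can be applied tree-by-tree. Let the trees of $F$ be $T_1, T_2, \ldots$ of sizes $n_1, n_2, \ldots$, and let $k_i = \lceil \log n_i \rceil \in \{0, 1, \ldots, \lceil \log n \rceil\}$ be the \emph{size class} of $T_i$, so that $n_i \le 2^{k_i}$. Each tree in class $k$ has at least $\max(1, 2^{k-1}+1)$ nodes, so the class contains at most $O(n/2^k)$ trees; these can be enumerated with a unique identifier $\mathrm{ID}_i$ of exactly $\lceil \log n \rceil - k_i + O(1)$ bits.

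For each $T_i$ I would apply the given $f$-labeling scheme at size parameter $2^{k_i}$ (valid since $n_i \le 2^{k_i}$), producing internal labels $L_v$ of uniform length $S(2^{k_i})$. The final label of $v \in T_i$ is
\[
\mathrm{enc}(k_i) \circ \mathrm{ID}_i \circ L_v,
\]
where $\mathrm{enc}(k_i)$ is a fixed-length encoding of $k_i$ in $\lceil \log(\lceil \log n \rceil + 1) \rceil = \log\log n + O(1)$ bits. The decoder reads $\mathrm{enc}(k_i)$ from the fixed-length prefix, then uses $k_i$ together with $n$ (which it knows from the label size) to split off $\mathrm{ID}_i$ and $L_v$. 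A connectivity query on $(u,v)$ reduces to comparing the pairs $(k_u, \mathrm{ID}_u)$ and $(k_v, \mathrm{ID}_v)$; when these match, the original $f$-decoder at size $2^{k_i}$ is invoked on $(L_u, L_v)$.

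The size analysis is where the growth hypothesis enters. Setting $a = n$ and $b = 2^{k_i}$ in $S(a) - S(b) \ge \log a - \log b - O(1)$ yields $S(2^{k_i}) \le S(n) - \log n + k_i + O(1)$, so
\[
|\mathrm{ID}_i| + |L_v| \le \bigl(\lceil \log n \rceil - k_i + O(1)\bigr) + \bigl(S(n) - \log n + k_i + O(1)\bigr) = S(n) + O(1).
\]
Adding the $\log\log n + O(1)$ bits of $\mathrm{enc}(k_i)$ gives the claimed total $S(n) + \log\log n + O(1)$. Notice that the $-k_i$ in the bound on $|\mathrm{ID}_i|$ cancels the $+k_i$ in the bound on $S(2^{k_i})$; it is exactly this cancellation that the growth condition on $S$ enables, and without it the per-class identifier would contribute an additive $\log n$ instead of being absorbed.

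The main obstacle I anticipate is making the three fields self-delimiting without incurring additional overhead: this works because $\mathrm{enc}(k_i)$ has a length depending only on $n$, and the remaining two lengths are deterministic functions of $k_i$ and $n$, both known to the decoder. A secondary and essentially routine check is that the $f$-scheme can be run with an ``oversized'' size parameter $2^{k_i}$ on a tree of $n_i \le 2^{k_i}$ nodes, and that the original decoder still handles the resulting labels correctly—standard for any labeling scheme parameterized by an upper bound on the graph size.
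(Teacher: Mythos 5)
Your proof is correct and follows essentially the same approach as the paper: both assign each tree a component identifier whose length shrinks as the tree grows, use the growth hypothesis on $S$ to cancel that identifier's length against the savings from labeling inside a smaller tree, and spend $\log\log n + O(1)$ bits to delimit the fields. The only difference is bookkeeping—you index components by (size class, index within class) where the paper sorts trees by size and gives the $i$th largest tree the rank $i$ in $\log i$ bits, exploiting that it has at most $n/i$ nodes.
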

\begin{proof}
    We will consider the label $\la(v) = \{C\circ L\circ sep\}$ defined as
    follows. First, sort the trees of the forest according to their sizes. For the $i$th biggest tree we set $C = i$ using $\log i$ bits. Since the tree has at most $n/i$ nodes, we can pick the label $L$ internally in the tree using only $S(n/i)$ bits. Finally, we need a separator, $sep$, to separate $C$ from $L$. We can represent this using $\log\log n$ bits, since $i$ uses at most $\log n$ bits.

    The total label size is this $\log i + S(n/i) + \log\log n + O(1)$ bits, which is less than
    $S(n) + \log\log n + O(1)$ if $S(n) - S(n/i) \ge \log i - c$ for some
    constant $c$, which holds by our assumption. Since $f$ is a function of two nodes from
    the same tree, this altered labeling scheme can answer both queries for $f$
    as well as connectivity. It is now required that any label assigned has size exactly $S(n) + \log\log n$ bits, so that the decoder may correctly identify $sep$ in the bit string. For that purpose we  pad labels with less bits with sufficiently many $0$'s.
\end{proof}

As a special case, we get a labeling scheme for connectivity and
sibling/ancestry for $\log n + 2\log\log n$ and for connectivity and sibling of
$\log n + \log\log n$ if the labels need not be unique.

The following corollary is a direct result of~\cite{Alstrup02,Alstrup05}. A
sketch of the proof is found in App.~\ref{app:adj_sib}.
\begin{corollary}\label{cor:adj_sib_static}
There exists unique labeling scheme supporting both sibling and adjacency  queries of size at most $\log n + 2 \log \log n$.
\end{corollary}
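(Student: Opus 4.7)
The plan is to combine two ingredients already present in the cited literature. The Alstrup--Bille--Rauhe (ABR) scheme \cite{Alstrup05} supplies an adjacency plus siblings labeling of size $\log n + 5\log\log n$, based on a heavy-path decomposition in which each label carries, in addition to a node identifier, several auxiliary $O(\log\log n)$-bit fields that together encode enough information about the parent to answer both queries. The Alstrup--Rauhe adjacency scheme \cite{Alstrup02} shows how, via a more refined recursive heavy-path numbering, the per-node identifier alone can be compressed into $\log n + O(\log^* n)$ bits while still allowing adjacency to be decided from a pair of such identifiers.

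My approach is to substitute the ABR base identifier by the compact identifier of \cite{Alstrup02} and then reintroduce only the minimum amount of data needed for siblings. Concretely, set $\mathcal{L}(v) = I(v) \circ s \circ p(v)$, where $I(v)$ is the Alstrup02 adjacency identifier of $v$, $p(v)$ is the offset of $\operatorname{parent}(v)$ on its own heavy path using $\lceil\log\log n\rceil$ bits, and $s$ is a $\lceil\log\log n\rceil$-bit self-delimiting field encoding $|I(v)|$ so that the decoder can split the concatenation (shorter labels being padded with zeros as in Theorem~\ref{thm:static-con-alter-ub}). Since $|I(v)| \le \log n + O(\log^* n)$ and $\log^* n \le \log\log n$ for all but small $n$, the total label length is $\log n + 2\log\log n + O(1)$.

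Decoding adjacency reduces to feeding $I(v)$ and $I(u)$ to the Alstrup02 decoder. For siblings, observe that a node's parent is determined by the heavy path containing it together with the offset on that path. The heavy path of $\operatorname{parent}(v)$ is recoverable from $I(v)$ alone: if $v$ is not the apex of its heavy path then $\operatorname{parent}(v)$ lies on the same path as $v$, and otherwise the apex structure of the Alstrup02 identifier already encodes the enclosing heavy path. The field $p(v)$ supplies the one remaining coordinate, so the decoder can reconstruct the full parent identifier of each of $u$ and $v$ and test whether they coincide.

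The main obstacle is the bookkeeping in the previous step: verifying that the heavy path containing $\operatorname{parent}(v)$ is extractable from $I(v)$ without any additional bits, and that the recursive numbering of \cite{Alstrup02} is compatible with using a single $\log\log n$-bit field $p(v)$ to pin down the parent's position uniquely. Both facts are implicit in the constructions of the cited papers; this is precisely why the authors describe this corollary as a direct consequence of \cite{Alstrup02,Alstrup05} and defer only a sketch to the appendix.
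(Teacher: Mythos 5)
Your proposal is not the paper's argument, and as written it has a genuine gap at its core step. The paper does not augment the $\log n+O(\log^* n)$ adjacency label as a black box; it rebuilds the scheme from the cluster decomposition of \cite{Alstrup02}: the tree is cut into $O(n/x)$ clusters of size $O(x)$ with $x=O(\log^4 n)$, the resulting macro tree is labeled with a \emph{modified} $1$-relationship scheme of \cite{Alstrup05} (storing $spre(parent(\cdot))$ for heavy nodes too, which requires reproving their Lemma~4), and each node additionally carries an $O(\log\log n)$-bit within-cluster label plus a constant-size type flag. The bounded cluster size is what makes every ``local position'' field cost only $O(\log\log n)$ bits, and the grafted $1$-relationship machinery is what lets the decoder identify parents across cluster boundaries.

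Your construction breaks down at exactly the points this machinery is designed to handle. First, you store ``the offset of $\operatorname{parent}(v)$ on its own heavy path'' in $\lceil\log\log n\rceil$ bits, but a heavy path can contain up to $n$ nodes, so this offset needs up to $\log n$ bits; there is no bound of $O(\log n)$ on heavy-path length, which is precisely why the paper works with size-$O(\log^4 n)$ clusters instead (also, \cite{Alstrup02} is a recursive cluster decomposition, not a heavy-path numbering). Second, and more fundamentally, you assert that the heavy path (or cluster) containing $\operatorname{parent}(v)$ is ``recoverable from $I(v)$ alone'' and that the decoder can then ``reconstruct the full parent identifier.'' The adjacency label $I(v)$ is built only so that adjacency can be \emph{verified} given both endpoints' labels; it does not determine the other endpoint, and if one could extract enough of $\operatorname{parent}(v)$'s identifier from $I(v)$ plus $O(\log\log n)$ bits to test equality of parents, one would have rederived the $\log n+O(\log\log n)$ parent/$1$-relationship scheme of \cite{Alstrup05} for free --- that is the hard part, and your proof assumes it rather than proving it. The sentence ``both facts are implicit in the constructions of the cited papers'' is where the actual work of the corollary lives; the paper's appendix spends its entire sketch discharging precisely those two obligations.
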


\subsubsection{Lower Bound}
We now show, that the upper bounds implied  by Theorem~\ref{thm:static-con-alter-ub} for labeling schemes supporting siblings and connectivity  are indeed tight for both the unique and non-unique  cases.
To that end we consider the following forests: For any integers $a,b,n$ such that $ab \mid n$
denote by $F_n(a,b)$ a forest consisting of $a$
components (trees), each with $b$ \emph{sibling groups}, where each sibling
group is composed of $\frac{n}{a\cdot b}$ nodes. Note that $F_n(a,b)$ has at least
$n$ but no more than $2n$ nodes.

Our proofs work as follows:
Firstly, for any two forests $F_n(a,b)$ and $F_n(c,d)$ as defined above, we
        establish an upper bound on the number of labels that can be assigned
        to both $F_n(a,b)$ and $F_n(c,d)$.
Secondly, for a carefully chosen family of forests $F_n(a_1,b_1), \ldots,
        F_n(a_k,b_k)$, we  show that when labeling $F_n(a_i,b_i)$ at least
        a constant fraction of the labels has to be distinct from the labels of
        $F_n(a_1,b_1),\ldots, F_n(a_{i-1},b_{i-1})$.
Finally,  by summing over each $F_n(a_i, b_i)$ we show that a sufficiently
        large number of bits are required by any labeling scheme supporting the
        desired queries.
        
Our technique is a simpler version of the boxes and groups argument of Alstrup et~al.~\cite{Alstrup05}, and generalizes to the case of two nested equivalence classes, namely connectivity and siblings. The proofs for Lem.~\ref{lem:forest-reuse} and~\ref{lem:set-reuse-bound} are in 
App.~\ref{proof:forest-reuse} and App.~\ref{proof:set-reuse-bound} respectively.

\begin{lemma}\label{lem:forest-reuse}
    Let $F_n(a,b)$ and $F_n(c,d)$ be two forests such that $ab \ge cd$.
    Fix some unique labeling scheme supporting both
    connectivity and siblings, and denote the set of labels assigned to
    $F_n(a,b)$ and $F_n(c,d)$ as $e_1$ and $e_2$ respectively.
	Then
    \[
        |e_1\cap e_2| \le \min(a,c)\cdot\min(b,d)\cdot\frac{n}{a\cdot b}\ .
    \]
\end{lemma}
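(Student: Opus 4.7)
The plan is to argue that the labels in the intersection $S = e_1 \cap e_2$ must look structurally identical when viewed from either forest, and then count the intersection by layering the connectivity partition, the sibling refinement, and the sibling-group size.

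First I would argue \emph{consistency}: since the scheme is a (deterministic, unique) labeling scheme and the decoder sees only the labels, for any two labels $\ell,\ell' \in S$ the decoder's output for $\text{connectivity}(\ell,\ell')$ and $\text{siblings}(\ell,\ell')$ is the same regardless of which forest the pair was drawn from. Consequently the connectivity relation and the sibling relation induce the \emph{same} equivalence relations on $S$ whether $S$ is regarded as a subset of the label set of $F_n(a,b)$ or of $F_n(c,d)$. This is the only nontrivial conceptual step; everything that follows is bookkeeping under this constraint.

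Next I would count in three layers. (i) \textbf{Connectivity classes.} Each connectivity class on $S$ is contained in a single tree of $F_n(a,b)$ and in a single tree of $F_n(c,d)$; since these assignments of classes to trees are injective (a tree is a connected component), the number of connectivity classes intersecting $S$ is at most $\min(a,c)$. (ii) \textbf{Sibling classes within a connectivity class.} Fix one connectivity class; its sibling subclasses must similarly embed into the $b$ sibling groups of the relevant tree of $F_n(a,b)$ and into the $d$ sibling groups of the relevant tree of $F_n(c,d)$, giving at most $\min(b,d)$ sibling classes per connectivity class. (iii) \textbf{Labels inside a single sibling class.} A sibling group of $F_n(a,b)$ contains $n/(ab)$ distinct labels (labels are unique) and a sibling group of $F_n(c,d)$ contains $n/(cd)$, so their intersection has size at most $\min(n/(ab),n/(cd)) = n/(ab)$, using the hypothesis $ab \ge cd$.

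Multiplying the three bounds yields
\[
    |e_1 \cap e_2| \;\le\; \min(a,c)\cdot \min(b,d)\cdot \frac{n}{ab},
\]
which is exactly the claimed inequality. The only real obstacle is step~(i)–(ii): one must be a bit careful that a single connectivity class of $S$ cannot ``spread'' across two trees in either forest (which follows because, by consistency of the decoder, membership in a connectivity class is determined by the labels alone and must coincide with the actual tree partition in both $F_n(a,b)$ and $F_n(c,d)$), and similarly for sibling classes inside a connectivity class. Once this is nailed down the counting is immediate.
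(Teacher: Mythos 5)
Your proposal is correct and is essentially the paper's own argument: the paper phrases steps (i)--(ii) as a one-to-one matching between sibling groups (at most $\min(a,c)$ shared components, at most $\min(b,d)$ shared sibling groups per component) and step (iii) as the bound $|s_1\cap s_2|\le\min(|s_1|,|s_2|)=\frac{n}{ab}$ on a matched pair, which is exactly your three-layer count. The consistency observation you single out is the same fact the paper invokes when it says the sibling and connectivity relations ``must be maintained.''
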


\begin{lemma}\label{lem:set-reuse-bound}
    Let $F_n(a_1,b_1),\ldots, F_n(a_i,b_i)$ be a family of forests with
    $a_1\cdot b_1\le \ldots\le a_i\cdot b_i$. Assume there exists a unique
    labeling scheme supporting both connectivity and siblings, and let $e_j$
    denote the set of labels assigned by such a scheme to the forest
    $F_n(a_j,b_j)$. Assume that the sets $e_1,\ldots,e_{i-1}$ have already
    been assigned. Then the number of distinct labels the encoder must
    introduce when assigning $e_i$ is at least
    \[
        n - \sum_{j=1}^{i-1} \min(a_j,a_i)\cdot \min(b_j,b_i)\cdot
        \frac{n}{a_i\cdot b_i}\ .
    \]
\end{lemma}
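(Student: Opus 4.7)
The plan is to combine the pairwise overlap bound from Lemma~\ref{lem:forest-reuse} with a simple union bound.

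First I would observe that since the labeling scheme assigns unique labels and $F_n(a_i, b_i)$ contains exactly $n$ nodes (namely $a_i$ trees, each with $b_i$ sibling groups of $n/(a_i b_i)$ nodes, so $a_i b_i \cdot n/(a_i b_i) = n$ nodes), we have $|e_i| = n$. The number of \emph{distinct} labels that must be introduced when assigning $e_i$ is therefore
\[
    |e_i \setminus (e_1 \cup \cdots \cup e_{i-1})| \;\ge\; |e_i| - \sum_{j=1}^{i-1}|e_i \cap e_j| \;=\; n - \sum_{j=1}^{i-1}|e_i \cap e_j|.
\]

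Next I would bound each intersection $|e_i \cap e_j|$ using Lemma~\ref{lem:forest-reuse}. By the assumption $a_1 b_1 \le \cdots \le a_i b_i$, for every $j < i$ we have $a_j b_j \le a_i b_i$, so the hypothesis of Lemma~\ref{lem:forest-reuse} is satisfied with $(a,b) = (a_i,b_i)$ and $(c,d) = (a_j,b_j)$. Applying the lemma yields
\[
    |e_i \cap e_j| \;\le\; \min(a_i,a_j)\cdot \min(b_i,b_j)\cdot \frac{n}{a_i \cdot b_i}.
\]

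Substituting these bounds into the displayed inequality above gives exactly the claimed lower bound
\[
    n - \sum_{j=1}^{i-1} \min(a_j, a_i)\cdot \min(b_j, b_i)\cdot \frac{n}{a_i\cdot b_i}.
\]

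There is essentially no hard step here: the whole lemma is a direct combination of cardinality ($|e_i| = n$), a union bound, and the pairwise overlap bound of Lemma~\ref{lem:forest-reuse}. The only point that must be checked carefully is that the ordering hypothesis $a_1 b_1 \le \cdots \le a_i b_i$ puts $F_n(a_i,b_i)$ into the role of the ``larger'' forest in every application of Lemma~\ref{lem:forest-reuse}, so that the normalization factor $n/(a_i b_i)$ appears (rather than $n/(a_j b_j)$). This is what later allows the summation step in Theorem~\ref{thm:static-combo-lb} to telescope into a useful lower bound on the total label count, and hence on the label size.
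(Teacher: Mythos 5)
Your proposal is correct and matches the paper's own proof essentially verbatim: both bound the number of distinct labels by $n - \sum_{j<i}|e_i\cap e_j|$ via a union bound and then apply Lemma~\ref{lem:forest-reuse} with $F_n(a_i,b_i)$ in the role of the forest with the larger product $a_ib_i$, yielding the normalization $n/(a_i\cdot b_i)$. Your remark about why the ordering hypothesis matters is exactly the right point to check, and nothing further is needed.
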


We now use Lem.~\ref{lem:set-reuse-bound} to show the following
known result~\cite{Alstrup05}.

\begin{warmup}
    Any static labeling scheme for connectivity queries requires at least $\log
    n + \log\log n - O(1)$ bits.
\end{warmup}
\begin{proof}
    Consider the family of $\log_3 n$ forests $F_n(1,1), F_n(3,1), \ldots,
    F_n(\log_3 n, 1)$. Since no two nodes are siblings we can use this forest
    combined with Lem.~\ref{lem:set-reuse-bound} as a lower bound for
    connectivity. Let $e_j$ denote the label set assigned by an encoder for
    $F_n(3^j,1)$. We assume that the labels are assigned in the order
    $e_0,\ldots,e_{\log_3 n}$. By Lem.~\ref{lem:set-reuse-bound} the number of
    distinct labels introduced when assigning $e_j$ is at least
    \[
        n - n\sum_{i=0}^{j-1} 3^{i-j} > n/2\ .
    \]
    It follows that labeling the $\log_3 n$ forests in the family requires at
    least $\Omega(n\log n)$ distinct labels.
\end{proof}

This idea extends to some cases of non-unique labeling schemes, as seen in
the theorem below. The proof  of Thm.~\ref{thm:con_sib_dyn_lb} is included in App.~\ref{proof:con_sib_dyn_lb}.

\begin{theorem}\label{thm:con_sib_dyn_lb}
    Any static labeling scheme supporting both connectivity and sibling queries
    requires at least $\log n + \log\log n - O(1)$ bits if the labels need not
    be unique.
\end{theorem}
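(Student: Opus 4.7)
The plan is to reduce the non-unique case to the unique case by choosing a family of forests on which any labeling scheme supporting both sibling and connectivity queries is \emph{forced} to act as a unique scheme. Concretely, I would work with the family
\[
    F_n(3^k,\,\lfloor n/3^k\rfloor),\qquad k=0,1,\ldots,\lfloor\log_3 n\rfloor,
\]
i.e., forests in which every sibling group is a singleton. The first step is the key observation: on such a forest no two distinct nodes $u\neq v$ can share a label. If $\la(u)=\la(v)=\ell$, then the decoder, on input $(\ell,\ell)$, must answer $\mathit{siblings}(u,u)=\textbf{true}$ and also $\mathit{siblings}(u,v)=\textbf{false}$ (since $u$ and $v$ lie in different singleton sibling groups), a contradiction. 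Thus, for every forest in our family, the scheme behaves exactly as a unique labeling scheme, even though globally it need not.

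The second step is to apply the two technical lemmas. Since within each $F_n(3^k,\lfloor n/3^k\rfloor)$ labels are forced to be distinct, the proof of Lem.~\ref{lem:forest-reuse}, which only uses within-forest uniqueness, goes through verbatim and yields, for $k<\ell$,
\[
    |e_k\cap e_\ell|\;\le\;\min(3^k,3^\ell)\cdot\min(\lfloor n/3^k\rfloor,\lfloor n/3^\ell\rfloor)\cdot\tfrac{n}{3^\ell\lfloor n/3^\ell\rfloor}\;\le\;n\cdot 3^{k-\ell}.
\]
Plugging this into Lem.~\ref{lem:set-reuse-bound}, the number of fresh labels that the encoder must introduce when producing $e_\ell$ is at least
\[
    n \;-\; \sum_{k=0}^{\ell-1} n\cdot 3^{k-\ell} \;\ge\; n - n\sum_{j=1}^{\infty}3^{-j} \;=\; n/2.
\]

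Summing over all $\lfloor\log_3 n\rfloor+1$ forests in the family gives a total of $\Omega(n\log n)$ distinct labels used by the scheme, so the label size must be at least $\log(n\log n)-O(1)=\log n+\log\log n-O(1)$ bits, establishing the theorem. The only subtle point — and the main thing that needs careful verification — is the initial reduction: that sibling queries alone already enforce uniqueness as soon as sibling groups have size $1$, so that the counting infrastructure developed for unique schemes in Lem.~\ref{lem:forest-reuse} and Lem.~\ref{lem:set-reuse-bound} transfers directly. Once this is established, the remainder is an arithmetic calculation identical in spirit to the connectivity-only warm-up.
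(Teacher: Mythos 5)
Your proposal is correct and follows essentially the same route as the paper's own proof: the same family $F_n(3^k, n/3^k)$ of forests with singleton sibling groups, the same observation that reflexive sibling queries force uniqueness within each such forest, and the same application of Lemmas~\ref{lem:forest-reuse} and~\ref{lem:set-reuse-bound} yielding $n/2$ fresh labels per forest and hence $\Omega(n\log n)$ labels in total. The only cosmetic difference is that you carry floors where the paper assumes $n$ is a power of $3$.
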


\begin{theorem}\label{thm:static-combo-lb}
    Any unique static labeling scheme supporting both connectivity and sibling queries
    requires labels of size at  least  $\log n + 2\log\log n - O(1)$.
\end{theorem}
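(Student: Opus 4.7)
The plan is to extend the warm-up proof of the $\log n + \log\log n$ connectivity bound. There the family $\{F_n(3^j, 1)\}_{j=0}^{\log_3 n}$ had $\Theta(\log n)$ members, each contributing $\Omega(n)$ fresh labels through Lemma~\ref{lem:set-reuse-bound}. Since our scheme must additionally support siblings, I intend to play the same game along the second coordinate $b$, yielding a family of size $\Theta((\log n)^2)$ and hence $\Omega(n\cdot(\log n)^2)$ required labels. Concretely, fix a sufficiently large constant $B$ and let $m = \lfloor \log_B n \rfloor$. I would use the family
\[
    \mathcal{F} = \{\,F_n(B^x, B^y) : x, y \geq 0,\ x + y \leq m\,\},
\]
of $\binom{m+2}{2} = \Theta((\log n)^2)$ forests, all on $O(n)$ nodes, ordered by $a_i b_i = B^{x+y}$ non-decreasingly so that Lemma~\ref{lem:set-reuse-bound} applies.

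For a forest $F_n(B^x, B^y)$ in this ordering, the number of fresh labels it requires is at least $n$ minus $n$ times the sum
\[
    \Sigma(x,y) = \sum_{\substack{(x',y') \neq (x,y)\\ x'+y'\leq x+y}} \frac{\min(B^{x'},B^x)\min(B^{y'},B^y)}{B^{x+y}}.
\]
I would bound $\Sigma(x,y)$ by splitting into the three cases (i) $x' \leq x$ and $y' \leq y$, (ii) $x' < x$ and $y' > y$, and (iii) $x' > x$ and $y' < y$. Case (i) factors as a product of two finite geometric series in $B^{-1}$ and contributes at most $\bigl(B/(B-1)\bigr)^2 - 1$; cases (ii) and (iii) are symmetric, and parameterising by $d = x - x'$ (respectively $d = y - y'$) each contributes at most $\sum_{d\geq 1} d\cdot B^{-d} = B/(B-1)^2$. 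For $B$ large enough (e.g.\ $B = 7$) these combine to a constant strictly less than $1$, so every forest in $\mathcal{F}$ forces $\Omega(n)$ fresh labels.

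Summing over the $\Theta((\log n)^2)$ members of $\mathcal{F}$ gives a total of $\Omega(n\cdot(\log n)^2)$ distinct labels, and since the labels are unique this implies a label size of at least $\log n + 2\log\log n - O(1)$ bits. The main technical obstacle is making sure the cross-contributions with $x' > x$ or $y' > y$ decay fast enough in $|x' - x| + |y' - y|$ to keep $\Sigma(x,y)$ bounded away from $1$ by an absolute constant; this is what forces the base $B$ to be a sufficiently large constant rather than, for instance, $B = 2$ or $B = 3$ as used in the warm-up.
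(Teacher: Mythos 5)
Your proposal is correct and follows essentially the same route as the paper's own proof: the same two-parameter family $F_n(x^a,x^b)$ indexed over a triangle of exponents, the same application of Lemma~\ref{lem:set-reuse-bound}, and the same three-case split of the overlap sum into a geometric product plus two symmetric $\sum_d d\,B^{-d}$ tails, with a large constant base (you take $B=7$, the paper takes $x=6$) to push the total overlap below $1$. The only cosmetic difference is that you bound the overlap against all forests on or below the current diagonal rather than only the ones already assigned, which is slightly more conservative but changes nothing.
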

\begin{proof}
	Fix some integer $x$, and assume that $n$ is a power of $x$.
    We consider the family of forests $F_n(1,1), F_n(x,1), F_n(1,x),
    F_n(x^2,1),$ $F_n(x,x), F_n(1,x^2), \ldots,$\linebreak $F_n(1,x^{\log_x n})$.

    Let $e_a^b$ denote the label set assigned to $F_n(x^a,x^b)$ by an
        encoder. We  assign
        the labels in the order $e_0^0, e_1^0, e_0^1, e_2^0, e_1^1, \ldots,
        e_0^{\log_x n}$. Thus, when assigning $e_a^b$ we have already assigned
        all label sets $e_c^d$ with $c+d < a+b$ or $c+d = a+b$ and $d < b$. By
        Lem.~\ref{lem:set-reuse-bound}, the number of distinct labels introduced
        when assigning $e_a^b$ is at least
        \[
            n - \sum_{\substack{c+d<a+b\\c,d\ge 0}}
            \frac{n}{x^{a+b}}\cdot x^{\min(a,c)+\min(b,d)} +
            \sum_{d=0}^{b-1} \frac{n}{x^{a+b}}\cdot x^{a+d} \\
        \]
        This counting argument is better demonstrated in
        Fig.~\ref{fig:combo_static_lb}. In the figure, we are concerned with
        assigning the labels in $e_2^2$. The grey boxes represent the label
        sets already assigned, and the right-side figure shows the fractions of
        $n$ that each set $e_c^d$ at most has in common with $e_2^2$. Observe
        that we can split the above sum into three cases as demonstrated in the
        figure: If $c\le a$ and $d\le b$ the bound
        supplied by Lem.~\ref{lem:forest-reuse} is $x^{c+d-a-b}$. Otherwise,
        either $c>a$ or $d>b$, but not both. If $c>a$, recall
        that $d<b$ so the bound is $x^{d-b}$. For $d>b$ the bound is $x^{c-a}$
        by the same argument. Applying these rules, we see that the number of
        distinct labels introduced is at least
        \begin{align*}
        &n - n\cdot\left(\sum_{c=0}^a \sum_{d=0}^b x^{c+d-a-b} + \sum_{d=0}^{b-1}
        (b-d)\cdot x^{d-b} + \sum_{c=0}^{a-2} (a-c)\cdot x^{c-a}\right) + n \\
        \ge\ &n - n\cdot\left(\frac{x^2 + x + 2}{(x-1)^2}\right)+n  = n - n\cdot \frac{3x+1}{(x-1)^2}\ .
        \end{align*}
        Note that we add $n$, as we have also subtracted $n$ labels for the
        case when $(c,d) = (a,b)$.

    By setting $x = 6$ we get that the encoder must introduce $6n/25$ distinct
    labels for each $e_a^b$. Since we have $\Theta(\log^2 n)$ forests, a total
    of $\Omega(n\log^2 n)$ labels are required for labeling the family of forests.
    Each forest consists of no more than $2n$ nodes, which concludes the proof.
\end{proof}
\vspace{-7ex}
\begin{figure}[htbp]
    \centering
    \includegraphics[width=0.5\textwidth]{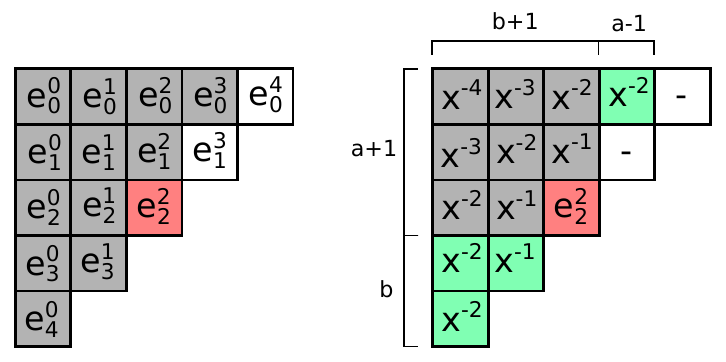}
    \caption{Demonstration of the label counting for $e_2^2$.}
    \label{fig:combo_static_lb}
\end{figure}

The same proof technique is used to prove the following theorem. For completeness, the proof is presented in Appendix~\ref{proof:con-anc}.
\begin{theorem}\label{thm:con-anc}
Any unique static labeling scheme supporting both connectivity and ancestry queries
requires labels of size at least $\log n + 2\log\log n - O(1)$.
\end{theorem}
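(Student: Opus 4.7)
The plan is to mirror the proof of Theorem~\ref{thm:static-combo-lb} step by step, substituting the sibling relation with ancestry via a suitable analog of the forest family $F_n(a,b)$. I will introduce $F_n^{\mathrm{anc}}(a,b)$ consisting of $a$ identical staircase trees: each tree has a root followed by $b$ levels, where level $i$ contains $n/(ab)$ nodes, all children of one chosen ``spine'' node at level $i-1$. The crucial structural property is that the $n/(ab)$ nodes at level $i$ share the same ancestor set (the root together with the spines at levels $1,\ldots,i-1$), so the $ab$ levels in the forest play the role of the $ab$ sibling groups of $F_n(a,b)$ while the total number of nodes remains $\Theta(n)$.

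Next, I will establish an ancestry-flavoured analog of Lemma~\ref{lem:forest-reuse}: for $F_n^{\mathrm{anc}}(a,b)$ and $F_n^{\mathrm{anc}}(c,d)$ with $ab\ge cd$, the two encodings share at most $\min(a,c)\cdot\min(b,d)\cdot n/(ab)$ labels, up to lower-order additive terms. Reused labels induce a connectivity- and ancestry-preserving partial bijection $\phi$: connectivity matches at most $\min(a,c)$ trees pairwise; within a matched pair $(T,T')$, ancestry preservation forces $\phi$ to preserve the number of reused ancestors of each node, which --- when both spines are reused --- equals the node's depth. Hence level $i$ of $T$ is matched only with level $i$ of $T'$, yielding at most $\min(b,d)$ matched levels with $n/(ab)$ reused nodes each. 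Substituting this bound into the analog of Lemma~\ref{lem:set-reuse-bound} and running the counting scheme of Theorem~\ref{thm:static-combo-lb} on the family $\{F_n^{\mathrm{anc}}(x^a, x^b) : a+b \le \log_x n\}$ with $x=6$ yields $\Omega(n\log^2 n)$ distinct labels, giving the claimed $\log n + 2\log\log n - O(1)$ bits.

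The main technical obstacle is the ancestry analog of Lemma~\ref{lem:forest-reuse}. Siblings is an equivalence relation readable by the decoder from a single pairwise query, whereas ancestry is only a partial order witnessed via ``reference'' ancestors; omitting spine labels from the reuse set collapses distinct levels of $T$ into a single ancestry-indistinguishable antichain that could in principle be matched wholesale into an antichain of leaves in a much shallower tree $T'$. To rule this out, I would use that each tree has only $O(\log n)$ spine nodes, so the missing-spine loophole contributes at most $O(\log^2 n)$ extra shared labels per forest pair --- negligible compared with the $\Theta(n)$ distinct labels introduced per forest in the main argument. Controlling this error is the technical heart of the proof; the remainder is a direct transcription of the argument for Theorem~\ref{thm:static-combo-lb}.
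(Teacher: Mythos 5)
Your reduction to the counting machinery of Theorem~\ref{thm:static-combo-lb} is the right template, but the family $F_n^{\mathrm{anc}}(a,b)$ you build does not support the key reuse lemma, and the patch you propose for the ``missing-spine loophole'' does not close it. Inside one of your trees, the $n/(ab)$ nodes of a level form an antichain, and any two \emph{non-spine} nodes from different levels of the same tree are also incomparable under ancestry. So once an encoder deliberately assigns fresh labels to the spine nodes, the remaining $\approx n/a$ non-spine labels of a tree carry no ancestry information whatsoever: every pairwise ancestry query among them answers \textbf{false}, exactly as for any antichain in a single component. Consequently $F_n^{\mathrm{anc}}(a,b)$ and $F_n^{\mathrm{anc}}(a,d)$ with $d<b$ can legally share all but $O(ab)$ of their $\approx n$ labels (match the per-tree antichains wholesale, respecting connectivity), whereas your claimed bound is $\min(b,d)\cdot n/b = dn/b \ll n$. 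The analog of Lemma~\ref{lem:forest-reuse} is therefore false for this family, and the assertion that each new forest introduces $\Omega(n)$ distinct labels collapses. Your error estimate counts the number of spine \emph{labels}, but the damage caused by not reusing them is $\Theta(n)$ freely reusable non-spine labels, not $O(\log^2 n)$; moreover the trees in your family have up to $x^b$ (as large as $n$) spine nodes, not $O(\log n)$.

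The paper sidesteps this by making the groups chains rather than antichains: $G_n(a,b)$ consists of $a$ components, each a root with $b$ pendant \emph{paths} of length $n/(ab)$. Within a path every pair of nodes is comparable, and nodes on different paths of a component are incomparable, so any two reused labels that are comparable in one forest must lie on a single path of the other; this forces a path-to-path matching with no spine caveat, and connectivity caps the matching at $\min(a,c)$ components with $\min(b,d)$ paths each, which is exactly Lemma~\ref{lem:forest-reuse-con-anc}. If you replace your levels by pendant paths, the remainder of your argument --- which is indeed a direct transcription of Theorem~\ref{thm:static-combo-lb} --- goes through.
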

\section{Concluding remarks}
We have considered multi-functional labels for the functions adjacency, siblings and connectivity.
We also provided a lower bound for ancestry and connectivity.
A major open question is wether it is possible to have a label of size $\log n +O (\log \log n)$ supporting all of the functions.
It seems unlikely that the best known labeling scheme for ancestry~\cite{Korman10} can be combined with the ideas of this paper.

In the context of dynamic labeling schemes, if arbitrary node insertion is permitted, neither adjacency nor sibling labels are  possible. All dynamic labeling schemes also operate when removal is allowed, simply by erasing the label to be removed. Moreover, if the tree contains  not more than $n$ nodes at any moment, it is easy to show that labels of size  2 $\log n$  are necessary and sufficient for each of the  functions.
  \bibliographystyle{elsarticle-num}
  \bibliography{Bibliography}

\begin{thebibliography}{10}
\expandafter\ifx\csname url\endcsname\relax
  \def\url#1{\texttt{#1}}\fi
\expandafter\ifx\csname urlprefix\endcsname\relax\def\urlprefix{URL }\fi
\expandafter\ifx\csname href\endcsname\relax
  \def\href#1#2{#2} \def\path#1{#1}\fi

\bibitem{Breuer67}
M.~A. Breuer, J.~Folkman, {An unexpected result in coding the vertices of a
  graph}, J. Mathematical Analysis and Applications 20 (1967) 583--600.

\bibitem{Kannan92}
S.~Kannan, M.~Naor, S.~Rudich, Implicit representation of graphs, in: SIAM
  Journal On Discrete Mathematics, 1992, pp. 334--343.

\bibitem{Alstrup02}
S.~Alstrup, T.~Rauhe, Small induced-universal graphs and compact implicit graph
  representations, in: FOCS '02, 2002, pp. 53--62.

\bibitem{Alstrup05}
S.~Alstrup, P.~Bille, T.~Rauhe, Labeling schemes for small distances in trees,
  SIAM J. Discret. Math. 19~(2) (2005) 448--462.

\bibitem{Korman10}
P.~Fraigniaud, A.~Korman, An optimal ancestry scheme and small universal
  posets, in: STOC '10, STOC '10, 2010, pp. 611--620.

\bibitem{alstrup2013near}
S.~Alstrup, E.~B. Halvorsen, K.~G. Larsen, Near-optimal labeling schemes for
  nearest common ancestors, in: SODA, 2014, pp. 972--982.

\bibitem{Thorup01}
M.~Thorup, U.~Zwick, Compact routing schemes, in: SPAA '01, 2001, pp. 1--10.

\bibitem{Peleg00}
D.~Peleg, Proximity-preserving labeling schemes, Journal of Graph Theory 33~(3)
  (2000) 167--176.

\bibitem{wu2004prime}
X.~Wu, M.~L. Lee, W.~Hsu, A prime number labeling scheme for dynamic ordered
  xml trees, in: Data Engineering, 2004. Proceedings. 20th International
  Conference on, IEEE, 2004, pp. 66--78.

\bibitem{lewenstein2013succinct}
M.~Lewenstein, J.~I. Munro, V.~Raman, Succinct data structures for representing
  equivalence classes, in: Algorithms and Computation, Springer, 2013, pp.
  502--512.

\bibitem{cohen2010labeling}
E.~Cohen, H.~Kaplan, T.~Milo, Labeling dynamic xml trees, SIAM Journal on
  Computing 39~(5) (2010) 2048--2074.

\bibitem{Fraigniaud01}
P.~Fraigniaud, C.~Gavoille, Routing in trees, in: ICALP '01, Springer, 2001,
  pp. 757--772.

\bibitem{korman2004labeling}
A.~Korman, D.~Peleg, Y.~Rodeh, Labeling schemes for dynamic tree networks,
  Theory of Computing Systems 37~(1) (2004) 49--75.

\bibitem{korman2007general}
A.~Korman, General compact labeling schemes for dynamic trees, Distributed
  Computing 20~(3) (2007) 179--193.

\bibitem{korman2007labeling}
A.~Korman, D.~Peleg, Labeling schemes for weighted dynamic trees, Information
  and Computation 205~(12) (2007) 1721--1740.

\bibitem{korman2008improved}
A.~Korman, Improved compact routing schemes for dynamic trees, in: PODC '08,
  ACM, 2008, pp. 185--194.

\bibitem{korman2009compact}
A.~Korman, Compact routing schemes for dynamic trees in the fixed port model,
  Distributed Computing and Networking (2009) 218--229.

\bibitem{Rotbart14}
N.~Rotbart, M.~Vas~Salles, I.~Zotos, An evaluation of dynamic labeling schemes
  for tree networks, SEA '14, 2014.

\bibitem{Peleg05}
D.~Peleg, Informative labeling schemes for graphs, Theor. Comput. Sci. 340~(3)
  (2005) 577--593.

\bibitem{Yao1977principle}
A.~C.-C. Yao, Probabilistic computations: Toward a unified measure of
  complexity, in: FOCS 77, IEEE Computer Society, Washington, DC, USA, 1977,
  pp. 222--227.

\bibitem{gavoille2007shorter}
C.~Gavoille, A.~Labourel, Shorter implicit representation for planar graphs and
  bounded treewidth graphs, in: Algorithms--ESA 2007, Springer, 2007, pp.
  582--593.

\bibitem{chung1990universal}
F.~R. Graham~Chung, Universal graphs and induced-universal graphs, Journal of
  Graph Theory 14~(4) (1990) 443--454.

\bibitem{Adjiasvhili14}
D.~Adjiashvili, N.~Rotbart, Labeling schemes for bounded degree graphs, ICALP
  '14, 2014.

\bibitem{bodlaender1988dynamic}
H.~L. Bodlaender, Dynamic programming on graphs with bounded treewidth,
  Springer, 1988.

\end{thebibliography}
\appendix
\section{Missing proofs}
\subsection{Lower bound for  dynamic labeling schemes}\label{lower-simple}
For the function sibling we use  the same family and a slightly different argument as follows.
First, it again holds that  $L_1^{k}\dots L_{n-k}^{k}$ must be different from
$P_1\dots P_n$, as they are the  only nodes that are siblings to $P_{k}$.
Furthermore, in $F_n(j)$ the label $L_i^j$ (where $j>k$) is not a sibling of
$P_{k}$, so $L_i^j$ must be distinct from $\{L_1^{k}, \dots, L_{n-k}^{k}\}$.

Finally, for an identical  lower bound on connectivity we define
$\mathcal{F}^c_n(k)$ to be an  insertion sequence   of  $n$ nodes, creating  an
\emph{initial  forest } of  $1 < k < n $ single node trees, followed by  $n-k$
leaves adjacent to tree $k-1$. 

\subsection{Proof of Theorem~\ref{thm:random}} \label{proof:random}
\newcommand{\F}{\mathcal{F}}
We  prove the theorem for labeling schemes supporting adjacency requests.
The proof is similar for the two other types of labeling schemes.
Consider the set $F_n = \{\F_n(k) \mid 1 < k < n/2\}$ consisting of $\Theta(n)$
different insertion sequences, and say that we uniformly choose an insertions
sequence $S \in F_n$.
Fix a \emph{deterministic} labeling scheme supporting adjacency requests.
Each of $\F_n(k) \in F$ has $n-k > \frac{n}{2}$
labels which are distinct for this labeling scheme over $F_n$ (by
Lem.~\ref{lem:pers-anc}).
Say that we write $F_n$ as $F_n = \{S_1, S_2,\ldots, S_{\abs{F_n}}\}$ such that
the maximal label size of the distinct labels over $F_n$ from $S_i$
is smaller than that from $S_j$ if $i < j$. Now consider all the labels from the
insertion sequences $S_1,\ldots,S_i$ which are distinct over $F_n$. There are at
least $\frac{in}{2}$ of those meaning that at least one has label size $\log(in/2)$.
This means that there is a label from $S_i$ which is distinct over $F_n$ and
has label size $\ge \log n + \log i - 1$. This means that the expected value of
the maximal label size of $S$ (which is uniformly drawn from $F_n$) is at least:
\begin{align*}
\frac{1}{\abs{F_n}} \sum_{i=1}^{\abs{F_n}}
\left ( \log n + \log i - 1 \right )
& =
(\log n - 1) +
\frac{1}{\abs{F_n}}
\left (
    \abs{F_n}\log(\abs{F_n}) - O(\abs{F_n})
\right )
\\
& =
\log n + \log \abs{F_n} - O(1)
=
2\log n - O(1)
\end{align*}
Since this holds for any deterministic algorithm Yao's principle yields that
for any randomized algorithm there exists $\F_n(k) \in F_n$ such that the
expected value of the maximal label size is at least $2\log n - O(1)$ on that
insertion sequence.

\subsection{Proof of Lemma~\ref{lem:combi_lb}} \label{Lemma-combi-bi}
    Let $C_1,\ldots, C_n$ be the labels of $I_n(n,0)$ and let $P^j_1,\ldots,
    P^j_{n-j}$ be the labels of the path created by the insertion sequence
    $I_n(j,n-j)$. Since the encoder is deterministic, any insertion sequence
    $I_n(j,k)$ must assign the labels $C_1,\ldots, C_j$
    and $P^j_1,\ldots, P^j_k$ to the first $j+k$ nodes.

    Let $L^j_{k,i}$ denote the label of the $i$th \emph{path leaf} added as a
    part of the insertion sequence $I_n(j,k)$. Clearly
    $L^j_{k,i}$ is different from any $C_{j'}$ and $P^{j'}_{k'}$ by the
    argument of the proof of Lem.~\ref{lem:pers-anc}.

    Consider now two different leaves labeled $L^j_{k,i}$ and $L^{j'}_{k',i'}$.
    If $j=j'$ and $k=k'$ the labels must be different, as they are part of the
    same insertion sequence.

    If $j < j'$ then by looking at $I_n(j,k)$, $L^j_{k,i}$ and $C_j$ are connected.
    By looking at $I_n(j',k')$, $L^{j'}_{k',i'}$ and $C_{j}$ are not connected. Hence
    the labels are different. The case $j > j'$ is symmetric.
    If $j = j'$ and $k < k'$ then by looking at $I_n(j,k)$, $L^j_{k,i}$ and $P_{k}^j$
    are adjacent. And from $I_n(j',k')$ we see that $L^{j'}_{k',i'}$ and $P_{k}^j$ are
    not adjacent. Hence the labels are different. The case $k > k'$ is symmetric.

    In conclusion no two leaves get the same label in any of $I_n(j,k)$.	Since $I_n(j,k)$
    has $n-j-k$ leaves this means that $I_n(j,k)$ contains $n-j-k$ labels that are distinct
    for the labelling scheme over $I_n$.

\subsection{Proof sketch for Corollary \ref{cor:adj_sib_static}}
\label{app:adj_sib}

It was shown in \cite{Alstrup02} how to create a labeling scheme
using a recursive cluster decomposition to support adjacency in $\log n +
O(\log^* n)$ bits. We argue that this decomposition can be combined directly
with the $1$-relationship scheme of \cite{Alstrup05} to create a labeling
scheme supporting both adjacency and sibling using $\log n + 2\log\log n +
O(\log\log\log n)$ bits.

In this proof sketch, we assume that the reader is
familiar with the notations and definitions of \cite{Alstrup02,Alstrup05}.

For $1$-relationship, the scheme of \cite{Alstrup05} actually works with
$\log n + 3\log\log n + O(1)$ bits by storing $spre(parent(v))$ for heavy
nodes instead of only storing $spre(parent(v))$ for light nodes.
The key is to change Lem.~4 in \cite{Alstrup05} to work for heavy nodes. This
is done by considering $pre(v) - 1$ instead of $pre(v)$ for heavy nodes
in the proof. Since
$pre(v) = spre(v)$ we can get label size $\log n + 2\log\log n+O(1)$ for leaves by
adding an extra flag.

The cluster decomposition used in \cite{Alstrup02} works as follows: For some
integer $x$, the tree $T$ is split into $O(n/x)$ clusters of size $O(x)$. Each
cluster has at most two boundary nodes, which are part of more than one
cluster. We can view the clusters as a macro tree, where the nodes are the
boundary nodes and the edges are the clusters. Each cluster is one of three
types (see Fig.~\ref{fig:cluster_ex}): Either it is a leaf cluster
with just one boundary node ($\alpha$), it is a single edge ($\beta$), or it is
an internal cluster with two boundary nodes ($\gamma$). Note that for
$\gamma$-clusters, the top boundary node, $u$, has at most one child inside the
cluster.

\begin{figure}[htbp]
    \centering
    \includegraphics[width=.4\textwidth]{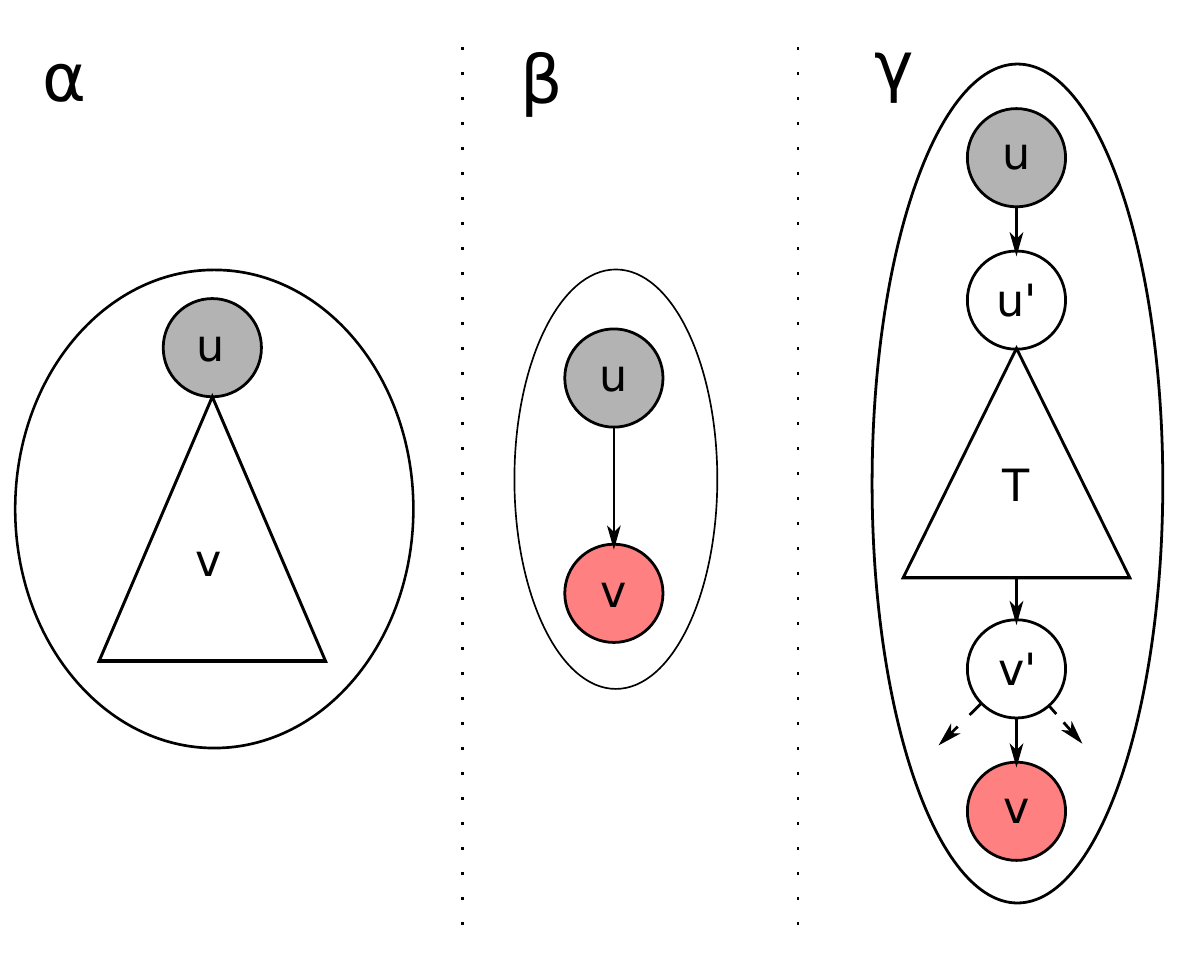}
    \caption{The three different types of clusters.}
    \label{fig:cluster_ex}
\end{figure}

The labeling scheme works by first labeling the macro tree with the modified
$1$-relationship scheme, such that the label of a cluster $C$ is denoted
$\la^M(C)$. Inside each cluster the nodes are labeled, such that the label of a
node $v$ is denoted by $\la^C(v)$.

A node $v$ of the original tree $T$ will be labeled the following way (refer to
Fig.~\ref{fig:cluster_ex} for the node types). Note that upper boundary nodes
$u$ are not included in the cluster -- only lower boundary nodes.
\begin{description}
    \item [Type-$v$ node in $\alpha$-cluster $C$:] We set $\la(v) =
        \{\la^M(C)\circ\la^C(v)\circ type\}$.
    \item [Type-$v$ node in $\beta$-cluster $C$:] We set $\la(v) =
        \{\la^M(C)\circ type\}$.
    \item [Type-$u'$ and type-$v$ nodes in $\gamma$-cluster $C$:] We set
        $\la(u') = \{\la^M(C)\circ type\}$ (and identical for $v$).
    \item [Type $T$ and type-$v'$ nodes in $\gamma$-cluster $C$:] We set
        $\la(v') = \{pre^M(C)\circ\la^C(v')\circ type\}$.
\end{description}

The $type$ parameter is a constant number of bits specifying the following:
Which cluster type is it $\{\alpha,\beta,\gamma\}$. Which type of node is it
$\{$child of $u$ in $\alpha$, type $u'$ in $\gamma$, type $v$ in $\gamma$,
type $v'$ in $\gamma$, child of $v'$ in $\gamma$, child of $u'$ in
$\gamma$, none of the above$\}$.

The proof of correctness and label size now follows by setting $x = O(\log^4
n)$ and the same techniques as in \cite{Alstrup02,Alstrup05}, which is
basically checking the cases of different pairs of node types.

\subsection{Proof of Thorem~\ref{lem:forest-reuse} }  \label{proof:forest-reuse}
    Consider label sets $s_1$ and $s_2$ of two sibling groups from $F_n(a,b)$ and
    $F_n(c,d)$ respectively for which $|s_1\cap s_2| \ge 1$. Clearly, we must have
    $|s_1\cap s_2| \le \min(|s_1|,|s_2|) = \frac{n}{a\cdot b}$. Furthermore, no
    other   sibling group of $F_n(a,b)$ or $F_n(c,d)$ can be assigned  labels from
    $s_1\cup s_2$, as the sibling relationship must be maintained. We can thus
    create a one-to-one matching between the sibling groups of $F_n(a,b)$ and
    $F_n(c,d)$,
    that have labels in common (note that not all sibling groups will
    necessarily be mapped). Bounding the number of common labels thus becomes a
    problem of bounding the size of this matching.
    In order to  maintain the connectivity relation, sibling groups from one component cannot be matched to several  components. Therefore at most $\min(b,d)$
    sibling groups can be shared per component, and at most $\min(a,c)$ components
    can be shared. Combining this gives the final bound of
    $\min(a,c)\cdot\min(b,d)\cdot\frac{n}{a\cdot b}$.

\subsection{Proof of Theorem~\ref{lem:set-reuse-bound} } \label{proof:set-reuse-bound}
    Assume that the encoder has already assigned labels to the set $e_i$.
    The number of distinct labels of $e_i$ is then exactly
    \[
        n - \left|\bigcup_{j=1}^{i-1} (e_j\cap e_i)\right|\ .
    \]
    Since $|A\cup B| \le |A|+|B|$ this is bounded from below by
    \[
        n - \sum_{j=1}^{i-1} |e_j\cap e_i| \ge
        n - \sum_{j=1}^{i-1} \min(a_j,a_i)\cdot \min(b_j,b_i)\cdot
        \frac{n}{a_i\cdot b_i}\ .
    \]
    Here the inequality follows from Lem.~\ref{lem:forest-reuse}
    
\subsection{Proof of Theorem
\ref{thm:con_sib_dyn_lb}}\label{proof:con_sib_dyn_lb}
The key idea is to create a family of forests, such that the non-unique case reduces to the unique case.

\begin{proof}
    Assume w.l.o.g.~that $n$ is a power of $3$.
    Consider the family of $\log_3 n$ forests $F_n(1,n), F_n(3,n/3),
    F_n(3^2,n/3^2), \ldots,$ $F_n(3^{\log_3 n}, 1)$. Since each sibling group of
    the forest $F_n(3^i,n/3^i)$ has exactly one node, we note that no two
    nodes are siblings. Thus each label of the forest has to be unique, since
    we have assumed that a node is sibling to itself. We can thus use
    Lem.~\ref{lem:forest-reuse} as if we were in the unique case for this
    family of forests.

    Let $e_j$ denote the label set assigned by an encoder for
    $F_n(3^j,n/3^j)$. We  assume that the labels are assigned in the order
    $e_0,\ldots,e_{\log_3 n}$. By Lem.~\ref{lem:set-reuse-bound} the number of
    distinct labels introduced when assigning $e_j$ is at least
    \[
        n - n\sum_{i=0}^{j-1} 3^{i-j} > n/2
    \]
    It follows that when labeling each of the $\log_3 n$ forests in the family,
    any encoder must introduce at least $n/2$ distinct labels, i.e. $\Omega(n\log n)$
    distinct labels in total.
    The family consist of forests with no more than $2n$ nodes, which
    concludes the proof.
\end{proof}

\subsection{Proof of Theorem \ref{thm:con-anc}} \label{proof:con-anc}

For integers $n,a,b$ such that $ab \mid n$, let $G_n(a,b)$ be  a forest
consisting of $a$ components consisting each  of $b$ paths of length
$\frac{n}{ab}$ each connected to a root in the component. Each forest in $G_n(a,b)$ consists of at least
$n$ but no more than $2n$ nodes.

The key idea in the proof of Thm.~\ref{thm:static-combo-lb} is the use of
Lem.~\ref{lem:forest-reuse}. Below we show  Lem.~\ref{lem:forest-reuse-con-anc}
which is is analogous to Lem.~\ref{lem:forest-reuse} which derives the proof of  Thm.~\ref{thm:con-anc} similarly.

\begin{lemma}\label{lem:forest-reuse-con-anc}
    Let $G_n(a,b)$ and $G_n(c,d)$ be two forests such that $ab \ge cd$.
    Fix some unique labeling scheme supporting both
    connectivity and ancestry queries, and denote the set of labels assigned to
    $G_n(a,b)$ and $G_n(c,d)$ as $e_1$ and $e_2$ respectively.
	Then
    \[
        |e_1\cap e_2| \le \min(a,c)\cdot\min(b,d)\cdot\frac{n}{a\cdot b}\ .
    \]
\end{lemma}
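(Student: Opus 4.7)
The plan is to mirror the proof of Lem.~\ref{lem:forest-reuse}, replacing the ``sibling groups'' of $F_n(a,b)$ with ``root-to-leaf chains'' of $G_n(a,b)$, that is, the unique chain of nodes running from the root of a component down through one of its $b$ legs to a leaf. In the sibling proof, sibling groups played two roles: each was an equivalence class under the $siblings$ predicate, and such groups decomposed cleanly across components. In the ancestry setting the analogues are: labels lying on a single root-to-leaf chain are totally ordered by $ancestry$, and the $b$ chains inside any one component all share that component's root label.

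First, I would invoke determinism of the decoder: any label occurring in both $e_1$ and $e_2$ must induce identical $connectivity$ and $ancestry$ answers in the two forests. In particular, if two shared labels are ancestrally comparable in $G_n(a,b)$, they are ancestrally comparable in $G_n(c,d)$ and hence must also lie on a single root-to-leaf chain there; similarly, shared labels belonging to one component of $G_n(a,b)$ must belong to a single component of $G_n(c,d)$. These two observations substitute directly for the sibling-preservation argument used in Lem.~\ref{lem:forest-reuse}.

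Next, from these consistency conditions I would derive a one-to-one matching in two layers: by connectivity, at most $\min(a,c)$ component pairs are matched, and by ancestry, within any matched component pair at most $\min(b,d)$ chain pairs are matched. A matched pair of chains, being two totally ordered sets of sizes $n/(ab)$ and $n/(cd)$ (ignoring roots), shares at most $\min\!\bigl(n/(ab),\, n/(cd)\bigr) = n/(ab)$ labels, using $ab\ge cd$. Multiplying the three factors gives the claimed bound $\min(a,c)\cdot\min(b,d)\cdot n/(ab)$.

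The main obstacle I foresee is the bookkeeping around the roots. The root of a component of $G_n(a,b)$ lies on \emph{all} $b$ of its chains simultaneously, so naively summing shared-label counts over matched chain pairs would overcount each shared root by a factor of $\min(b,d)$. I plan to handle this by treating root-labels separately: shared root-labels of $G_n(a,b)$ must, by the ancestry-consistency argument, correspond to ancestors common to all shared descendants in $G_n(c,d)$, hence to roots of $G_n(c,d)$, yielding at most $\min(a,c)$ root-to-root pairs, which I then fold back into the chain budget by reserving one of the $n/(ab)$ slots per matched chain for the root. This is the only step that genuinely departs from the argument of Lem.~\ref{lem:forest-reuse}, and it is where I expect the bulk of the care to be needed.
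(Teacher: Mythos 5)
Your proof follows essentially the same route as the paper's: a one-to-one matching between paths, with at most $\min(a,c)$ matched components (by connectivity), at most $\min(b,d)$ matched paths per component (by ancestry), and at most $n/(ab)$ shared labels per matched pair, multiplied together. The only divergence is that the paper takes its ``paths'' to be the $b$ legs hanging off each root rather than full root-to-leaf chains, so the root bookkeeping you anticipate never arises there (shared root labels are simply left out of the paper's count), whereas you include the root and patch the resulting overcount; both versions give the stated bound.
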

\begin{proof}
	Let $s_1$ and $s_2$ be the labels assigned to two paths from $G_n(a,b)$ and
	$G_n(a,b)$ respectively for which $s_1 \cap s_2 \neq \emptyset$.
	The number of labels the paths have in common is at most
	$\abs{s_1} = \frac{n}{ab}$.
	Furthermore,
	no other paths from $G_n(a,b)$ or $G_n(c,d)$ can reuse any labels from
	$s_1 \cup s_2$ since the ancestry relation has to be maintained. Therefore
	we can create a one-to-one matching between the paths from $G_n(a,b)$ and
    $G_n(c,d)$,
    which have at least on label in common (note that not all sibling groups will
    necessarily be mapped).

    Bounding the number of common labels thus reduces to
    bounding the size of this matching.
    In order to  maintain the connectivity relation, paths from one
    component cannot be matched to more than one. Therefore at most $\min(b,d)$
    paths can be shared per component, and at most $\min(a,c)$ components
    can be shared. Combining this gives the final bound of
    $\min(a,c)\cdot\min(b,d)\cdot\frac{n}{a\cdot b}$.
\end{proof}

\end{document}